\documentclass[a4paper]{article}
\usepackage{amsmath,amssymb,amsfonts,amsthm} 
\usepackage{graphicx} 


\usepackage{wrapfig} \usepackage{url} \usepackage[usenames]{color}
\def\b1{\mbox{\boldmath $1$}}

\textwidth=15cm \textheight=21cm \hoffset=-2cm \voffset=0cm


\theoremstyle{plain} \newtheorem{thm}{\bf Theorem}[section]
 \newtheorem{lem}[thm]{\bf Lemma}
 \newtheorem{defn}[thm]{\bf
  Definition} \theoremstyle{remark}



\makeatletter \@addtoreset{equation}{section} \makeatother \makeatletter
\@addtoreset{thm}{section} \makeatother
\allowdisplaybreaks 



\usepackage[numbers,sort&compress]{natbib} 
\usepackage[bookmarksnumbered, bookmarksopen,
colorlinks,citecolor=blue,linkcolor=blue]{hyperref}


\begin{document}
\date{}

\title{ Optimal Hybrid Dividend Strategy Under The Markovian Regime-Switching Economy}\author{XiaoXiao Zheng\thanks{School of Mathematical Sciences and LPMC, Nankai University}, Xin Zhang\thanks{School of Mathematical Sciences and LPMC, Nankai University, Tianjin 300071, P.R. China; Department of Mathematics, Southeast University, Nanjing, 210096, P.R. China; E-mail: nku.x.zhang@gmail.com} } \maketitle

\noindent{\bf Abstract}
In this paper, we consider the optimal dividend problem for a company. We describe the surplus process of the company by a diffusion model with regime switching. The aim of the company is to choose a dividend policy to maximize the expected total discounted payments until ruin. In this article, we consider a hybrid dividend strategy, that is, the company is allowed to conduct continuous dividend strategy as well as impulsive dividend strategy. In addition, we consider the change of economy, which is characterized by a markovian regime-switching, and under the setting of two regimes, we solve the problem and obtain the analytical solution for the value function.

\noindent
\\
\noindent {\it Keywords:} dividend strategy; impulse control; regime switching; quasi-variational inequalities

\newpage

\section{Introduction}
Over the past decades, optimal dividend problem has been a hot issue, and a large number of papers in this field have came out. The study of dividend problem has a realistic sense: for a joint-stock company, it has responsibility to pay dividends to its shareholder, therefore choosing a dividend strategy is important for this company.
The research of dividend problem stems from the work of \citet{de1957impostazione}. He is the first to suggest that a company should maximize the expected discounted dividend payout. In earlier research of this field, scholars focus on two kinds of dividend strategies. The first one is the constant barrier strategy.  In this model, we have a barrier b, splitting region into two parts. Under such policy, the surplus cannot cross the barrier $b$ at any time $t>0$, and when it hit the barrier, it will either stays at the barrier b or decreases below the barrier. Barrier strategy for the compound Poisson risk model has been studied by many scholars, \citet{gerber2006optimal} has showed that barrier strategy is optimal for the compound Poisson risk model when the initial value is below the barrier. For more results concern this topic, we could refer to other references. \citet{albrecher2005distribution} studied a model that allows for dividend payments under a linear barrier strategy. In this paper, partial integro-differential equations for Gerber and Shiu's discounted penalty function and for the moment generating function of the discounted sum of dividend payments were derived. A surplus process in the presence of a nonlinear dividend barrier was investigated in \citet{albrecher2002risk}. \citet{li2004class} treated renewal risk models with a constant dividend barrier. The second one is threshold strategy, specifically, dividends can be paid out at certain rate if the surplus exceeds a threshold. In the paper \citet{gerber2006optimal}, the author showed that the threshold strategy is optimal when the dividend rate is bounded and the claim distribution is exponential. Threshold dividend strategy under the classical compound Poisson model can be found in \citet{lin2006compound}.

Recently， stochastic control theory has been introduced to solve optimal dividend problem. HJB equation, QVI, and singular control as tools were used to deal with dividend problem from different aspects. There are two dividend strategies of interest------continuous dividend strategy and impulsive dividend strategy. \citet{asmussen1997controlled} considered optimal continuous dividend for diffusion model. In the paper \citet{azcue2005optimal}, for compound Poisson model, the author not only studied the optimal continuous dividend strategy but also considered the reinsurance policy. Impulsive dividend and reinsurance strategies for diffusion model can be found in paper \citet{cadenillas2006classical}. The case of compound Poisson model was studied in \citet{wei2010classical}. In our paper we consider a hybrid strategy, that is, we combine the continuous dividend strategy with impulsive dividend strategy. In reality such model make sense, it means that this company is allowed not only for continuously paying dividends but also for paying block dividends from time to time. Further we suppose the continuous dividend rate is bounded, while in \citet{sotomayor2011classical} the author consider the cases of both bound and unbound.

On the other hand, levy process, especially spectrally negative Levy process, is a powerful tool to characterize the dividend process, and there are many papers investigated this problem. In \citet{loeffen2008optimality}, under the classical optimal dividend control problem, the author studied the case in which the risk process is modeled by a general spectrally negative Levy process. \citet{loeffen2009optimal} considered an optimal dividends problem with transaction costs, and they assumed the reserves are described by a spectrally negative Levy process too.

During the past several years, there have been a large number of literatures considered regime switching problem. In general, regime switching was used to characterize the change of economic condition. Some scholars also studied regime switching in optimal dividend problem. ( see \citet{sotomayor2011classical}, \citet{wei2010classical}, \citet{wei2010optimal}).

In our paper, we articulate the problem of maximizing the expected utility of an company. Because of considering a compound dividend strategy(continuous dividend strategy as well as impulsive dividend strategy), the utility is consist of two parts which correspond to the two strategies respectively. On the other hand, we adopt Markovian regime-switching diffusion model to describe the surplus of the company, which is expected to reflect the economic circle's affection to the company's surplus.

The rest of the paper is organized as follows. In section 2, we introduce the model and some assumptions. Further we give two properties about value function. In Section 3, we introduce quasi-variational inequalities(QVI) and give the verification theorem. Section 4 contain our main result. In this section, we work out the candidate for the value function, and then we verify that this function satisfy the conditions in the verification theorem. In section 5, we give the conclusion.

\section{Preliminaries }
\quad \quad In this paper, we assume the uncertainty is modeled by a probability space $(\Omega,\mathcal{F}, \mathcal{P})$. At first we assume that $\big\{\epsilon(t)\big\}_{t\geq0}$ be a homogenous finite-state continuous-time Markov chain, and for every $t\geq0$: $\epsilon(t)\in \mathbb{J}$, where $\mathbb{J}=\big\{1,2,...,N\big\}$ and $N\geq2$. We also assume Markov chain $\epsilon$ has a strongly irreducible generator $Q=[\lambda_{ij}]_{N\times N}$, where $ -\lambda_i:=\lambda_{ii}<0$ and $\sum_{j\in\mathbb{J}}\lambda_{ij}=0$ for every $i\in\mathbb{J}$. Here, Markov chain represent the change of the economic situation, it can be used to describe good economy and bad economy, or to describe the business circle: recovery, prosperity, recession and depression, or any other regimes of economy. Second, we consider a company with surplus process $\big\{X_t\big\}_{t>0}$ and the uncertainty of the surplus process is characterized by a Brownian motion $W$ and the markov chain defined above. We also assume that $\epsilon$ and $W$ are independent and we denote by $\mathbb{F}=\big\{\mathcal{F}_t\big\}_{t\geq0}$ the $\mathcal{P}$-augmentation of the filtration $\{\mathcal{F}_t^{(W,\epsilon)}\}_{t\geq0}$ generated by Brownian motion and markov chain, where $\mathcal{F}_t^{(W,\epsilon)}=\sigma\big\{W_s,\epsilon_s:0\leq s\leq t\big\}, t\geq0 $.

We suppose the company would conduct dividend strategy, we model the surplus of the company $X=\big\{X_t, t\geq0\big\}$ by stochastic differential equation (SDE) :
\begin{eqnarray}
\label{a1}
dX_t=\mu(\epsilon_t)dt+\sigma(\epsilon_t)dW_t-dZ_t-\sum_{n=1}^\infty I_{\{\tau_n<t\}}\xi_n
\end{eqnarray}
with initial value $X_0=x\geq0$ and initial state $\epsilon(0)=i$, where the adapted process $Z=\big\{Z_t, t\geq0\big\}$ and the sum of the sequence of the random variable $\big\{\xi_n\big\}_{n\in N}$ represent the cumulative amount of dividends paid out by the company up to time $t$.
\\
\\
Remark:

In contrast to some classical papers, our model consider a so-called hybrid dividend strategy, specifically, we assume that the company is allowed to carry out two sorts of dividend strategies: continuous dividend strategy and impulsive dividend strategy. In terms of realistic sense, hybrid strategy provides more ways of paying out dividends. In terms of methodology, such strategy is more complicated in treating than single strategy problem.
\\

In this paper, we assume $\big\{Z_t\big\}_{t\geq0}$ is absolutely continuous, and let $dZ_t=u_tdt$. Further, we suppose dividend rate $u_t$ is bounded, and we denote $L$ as its bound.

Besides we define the stopping time of bankruptcy
\[\Theta:=\inf\big\{t\geq0:X_t<0\big\}\]
and impose $X_t=0$ for $t\in [\Theta , \infty)$.

\begin{defn}
A triple
\[\pi:=\big(u^\pi,\Gamma^\pi,\xi^\pi\big)=\big(u^\pi(t) ; \tau_1^\pi\,\tau_2^\pi,... ; \xi_1^\pi,\xi_2^\pi,...\big)\]
is called an admissible control if and only if
\newline (i) $ u^\pi(t): [0,\infty)\mapsto[0,L] $ is an $\mathbb{F}$-adapted bounded process,
\newline (ii) $\tau_i^\pi, i=1,2,... $ are stopping times with respect to $\mathbb{F}$, and
\[0\leq\tau_1^\pi<\tau_2^\pi<\cdots<\tau_n^\pi<\cdots,  \quad a.s. \]
\newline (iii) The random variables $\xi_i^\pi, i=1,2,...$ are $\mathcal{F}_{\tau_i^\pi}$ measurable and $0\leq\xi_i^\pi\leq X_{\tau_i^\pi}$,
\newline (iv) For all $T\geq0$,
\[\mathcal{P}\big(\lim_{n\rightarrow \infty}\tau_n^\pi\leq T\big)=0.\]
\end{defn}
The set of all admissible controls is denoted by $\Pi$.\\

Suppose the utility function of the shareholder is given by $g(x)$, which belongs to the following class
\[\mathcal{G}:=\bigg\{g(x): \forall x\geq0,\quad g(0)<0,\quad g'(x)>0 \quad and \quad g''(x)<0\bigg\}.\]
Let $\Theta^\pi=\inf\big\{t\geq0:X^\pi_t<0\big\}$ be the ruin time under the policy $\pi$. We aim at choosing a hybrid dividend policy to maximize the expected total discounted dividend payments, then given initial surplus $x$ and initial state $i$, with each admissible $\pi=\big(u^\pi,\Gamma^\pi,\xi^\pi\big)$, we have cost function
\[V_\pi(x,i)=E_{x,i}\bigg\{\int_0^{\Theta^\pi} e^{-\delta t}u_t^\pi dt + \sum_{n=1}^\infty e^{-\delta\tau_n^\pi}g(\xi_n^\pi)I_{\{\tau_n^\pi<\Theta^\pi\}}\bigg\}.\]
Note that not only the continuous dividend but also the impulsive dividend are considered in our model, the cost function above should incorporate the two corresponding utilities. Then we can define the value function
\begin{eqnarray}
\label{1}
V(x,i)=\sup_{\pi\in\Pi}V_\pi(x,i).
\end{eqnarray}
The optimal control $\hat \pi=\big(u^{\hat \pi},\Gamma^{\hat\pi},\xi^{\hat\pi}\big)$ is a strategy under which we have the equality below:
\begin{eqnarray}
\label{2}
V(x,i)=V_{\hat \pi}(x,i).
\end{eqnarray}

For the need of the rest of the paper, we define
\[\mu^*:=\max_{i\in \mathbb{J}}\big\{\mu_i\big\} \quad and \quad \mu_*:=\min_{i\in \mathbb{J}}\big\{\mu_i\big\},\]
and a technical assumption(H) is made:
\[(H)\quad L<\mu_*  , \]
that is,  at each regime the bound of the dividend rate is smaller than the drift of the surplus.\\

Now we derive some properties of the value function.
\begin{lem}
Given $g\in\mathcal{G}$ is a utility function, then for each $i\in\mathbb{J}$, the value function satisfy :
\[V(x,i)\leq g'(0+)\big[x+\frac{\mu^*}{\delta}\big]+\frac{L}{\delta},\]
where $ \mu^*=\max_{i\in\mathbb{J}}\big\{\mu_i\big\}.$
\end{lem}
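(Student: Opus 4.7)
The plan is to bound the two contributions to $V_\pi(x,i)$ separately: the running continuous-dividend integral, and the impulsive sum. The first is trivial from the bound $u_t\le L$, giving $E\int_0^{\Theta^\pi}e^{-\delta t}u_t^\pi dt\le L/\delta$. So the entire argument lives in bounding the impulsive term.

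For the impulsive part, I would first exploit concavity of $g$ together with $g(0)<0$. Since $g\in\mathcal{G}$ is concave, one has $g(\xi)\le g(0)+g'(0+)\xi<g'(0+)\xi$ for every $\xi\ge 0$. This reduces the problem to bounding
\[
E_{x,i}\Bigl[\sum_{n=1}^{\infty} e^{-\delta\tau_n^\pi}\xi_n^\pi I_{\{\tau_n^\pi<\Theta^\pi\}}\Bigr]\le x+\frac{\mu^*}{\delta},
\]
uniformly in $\pi\in\Pi$. Granting this, combining the three pieces gives the stated inequality after taking $\sup_\pi$.

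The core step is therefore to produce that uniform bound. My plan is to apply Itô's formula for jump-diffusions to the function $f(t,x)=e^{-\delta t}x$ along the controlled process $X^\pi$ on the interval $[0,\Theta^\pi\wedge T]$. The continuous part yields $e^{-\delta s}[-\delta X_s^\pi+\mu(\epsilon_s)-u_s^\pi]ds+e^{-\delta s}\sigma(\epsilon_s)dW_s$, and the jumps contribute $-\sum_{\tau_n^\pi\le t}e^{-\delta\tau_n^\pi}\xi_n^\pi$. Taking expectations (the local-martingale term vanishes after a standard localization since $\sigma$ is bounded and $X^\pi$ is bounded on $[0,\Theta^\pi\wedge T]$ by the surplus itself, whose expectation grows at most linearly in $T$), and rearranging, gives
\[
E\Bigl[\sum_{\tau_n^\pi\le\Theta^\pi\wedge T}e^{-\delta\tau_n^\pi}\xi_n^\pi\Bigr]=x-E[e^{-\delta(\Theta^\pi\wedge T)}X^\pi_{\Theta^\pi\wedge T}]+E\int_0^{\Theta^\pi\wedge T}e^{-\delta s}\bigl[-\delta X_s^\pi+\mu(\epsilon_s)-u_s^\pi\bigr]ds.
\]
Before $\Theta^\pi$ we have $X_s^\pi\ge 0$ and $u_s^\pi\ge 0$, so the integrand is bounded above by $\mu(\epsilon_s)e^{-\delta s}\le\mu^* e^{-\delta s}$; the integral is then dominated by $\mu^*/\delta$. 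By continuity of the diffusion part and admissibility condition (iii) (which prevents jumps from pushing $X^\pi$ below zero), $X^\pi_{\Theta^\pi}=0$, so the boundary term $E[e^{-\delta(\Theta^\pi\wedge T)}X^\pi_{\Theta^\pi\wedge T}]\ge 0$. Letting $T\to\infty$ and invoking monotone convergence on the left-hand sum yields the desired bound $x+\mu^*/\delta$.

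The main obstacle is the Itô argument: one must justify dropping the stochastic-integral expectation via localization (stopping at $\tau_M=\inf\{t:|X_t^\pi|\ge M\}$ and exploiting that $\sigma$ takes finitely many values hence is bounded), and one must confirm that the boundary contribution at $\Theta^\pi$ is nonnegative, which uses both the pathwise continuity of the Brownian component and admissibility (iii) ensuring no impulse drives $X^\pi$ strictly below $0$. Assumption (H) plays no role in this lemma but the structure of $\mathcal{G}$ (the sign $g(0)<0$ together with concavity) is essential for the linear majorant $g(\xi)\le g'(0+)\xi$.
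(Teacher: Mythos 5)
Your proof is correct and follows the same decomposition as the paper's: the continuous part is bounded by $L/\delta$ exactly as you do, the impulsive part is reduced via the concave majorant $g(\xi)\leq g'(0+)\xi$, and the whole lemma then rests on showing $E_{x,i}\big[\sum_n e^{-\delta\tau_n^\pi}\xi_n^\pi I_{\{\tau_n^\pi<\Theta^\pi\}}\big]\leq x+\mu^*/\delta$. The only divergence is in how that last bound is obtained: the paper rewrites the sum as $\int_0^\infty e^{-\delta s}\,dD_s$, integrates by parts to get $\int_0^\infty \delta e^{-\delta s}D_s\,ds$, and then invokes $E[D_s]\leq x+\mu^* s$ --- a fact it asserts without proof but which follows from the surplus dynamics and $X_s^\pi\geq 0$ --- whereas you derive the same bound directly by applying It\^o's formula to $e^{-\delta t}X_t^\pi$ and using nonnegativity of the pre-ruin surplus and of $u^\pi$; your route is essentially the paper's argument with the implicit step made explicit, at the cost of having to justify the localization of the stochastic integral, which you correctly flag and which is unproblematic since $\sigma(\cdot)$ takes finitely many values.
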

\begin{proof}
 For each $\pi \in \Pi$
 \[V_\pi(x,i)=E_{x,i}\bigg\{\int_0^{\Theta^\pi} e^{-\delta t}u_t^\pi dt + \sum_{n=1}^\infty e^{-\delta\tau_n^\pi}g(\xi_n^\pi)I_{\{\tau_n^\pi<\Theta^\pi\}}\bigg\}.\]
 First, we have
 \[E_{x,i}\bigg\{\int_0^{\Theta^\pi}e^{-\delta t}u_t^\pi dt\bigg\}\leq E_{x,i}\bigg\{\int_0^{\Theta^\pi}e^{-\delta t}Lds\bigg\}=\frac{L}{\delta}E_{x,i}\bigg\{1-e^{-\delta\Theta^\pi}\bigg\}\leq\frac{L}{\delta}.\]
 Next in virtue of $g(x)\leq g'(0+)x$, and for each $\pi\in\Pi$, let $D_t$ be the accumulated dividends process, and let $dD_t=0$ for $t\geq\Theta^\pi$. Thus we get
\begin{eqnarray*}
E_{x,i}\{\sum_{n=1}^\infty e^{-\delta\tau_n^\pi}g(\xi_n^\pi)I_{\{\tau_n^\pi<\Theta^\pi\}}\}
&\leq& g'(0+)E_{x,i}\bigg\{\sum_{n=1}^\infty e^{-\delta\tau_n^\pi}\xi_n^\pi I_{\{\tau_n^\pi<\Theta^\pi\}}\bigg\}\\
&=&g'(0+)E_{x,i}\bigg\{\int_0^\infty e^{-\delta s}d D_s\bigg\}\\
&=&g'(0+)E_{x,i}\bigg\{\int_0^\infty \delta e^{-\delta s}D_sds\bigg\}\\
&\leq& g'(0+)E_{x,i}\bigg\{\int_0^\infty \delta e^{-\delta s}(x+\mu^*s)ds\bigg\}\\
&=&g'(0+)\big(x+\frac{\mu^*}{\delta}\big),\\
\end{eqnarray*}
where $\mu^*=\max_{i\in\mathbb{J}}\big\{\mu_i\big\}$ . Then we have
\[V(x,i)\leq g'(0+)\big[x+\frac{\mu^*}{\delta}\big]+\frac{L}{\delta}.\]

\end{proof}

\begin{lem}
Given $g\in\mathcal{G}$ is a utility function, then for each $i\in\mathbb{J}$ and $0\leq x_1<x_2$, the value function satisfy :
\[V(x_2,i)-V(x_1,i)\geq g\big(x_2-x_1\big).\]
\end{lem}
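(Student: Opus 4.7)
The plan is to exploit the freedom to pay an impulsive dividend at time $0$. Starting with surplus $x_2$, I can immediately pay out a lump sum of size $x_2-x_1$, which drops the surplus to $x_1$ and collects an instantaneous utility of $g(x_2-x_1)$, and then I run an $\epsilon$-optimal policy for the problem starting from $(x_1,i)$. This produces an admissible strategy for $(x_2,i)$ whose cost is essentially $g(x_2-x_1)+V(x_1,i)$, and a supremum argument then gives the claimed inequality.

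Concretely, fix $\epsilon>0$ and pick $\pi^\epsilon=(u^\epsilon,\Gamma^\epsilon,\xi^\epsilon)\in\Pi$ with initial data $(x_1,i)$ satisfying $V_{\pi^\epsilon}(x_1,i)>V(x_1,i)-\epsilon$. I may assume the first impulsive time satisfies $\tau_1^\epsilon>0$ almost surely; if not, I postpone the first impulse by an arbitrarily small deterministic amount, which costs only $o(1)$ in utility by continuity of $t\mapsto e^{-\delta t}$. I then define $\pi\in\Pi$ for initial data $(x_2,i)$ by prepending an impulse at $\tau_1^\pi=0$ of size $\xi_1^\pi=x_2-x_1\in[0,X_0]$ and reindexing the rest as $(\tau_{n+1}^\pi,\xi_{n+1}^\pi)=(\tau_n^\epsilon,\xi_n^\epsilon)$, $u^\pi\equiv u^\epsilon$. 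Since $\epsilon_0=i$ is deterministic and the lump-sum payment is instantaneous, on $(0,\infty)$ the controlled surplus process under $\pi$ has the same law as the surplus process under $\pi^\epsilon$ started from $(x_1,i)$, so the ruin time and all the subsequent payment streams match in distribution.

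Evaluating the cost function and isolating the contribution $e^{0}g(x_2-x_1)=g(x_2-x_1)$ of the impulse at time $0$, I obtain
\[V_\pi(x_2,i)=g(x_2-x_1)+V_{\pi^\epsilon}(x_1,i)>g(x_2-x_1)+V(x_1,i)-\epsilon,\]
so $V(x_2,i)\geq V_\pi(x_2,i)$, and sending $\epsilon\to 0^+$ finishes the argument. The main obstacle is really just the bookkeeping of admissibility of $\pi$: condition (ii) demands the strict inequality $\tau_1^\pi<\tau_2^\pi$, which is what forces the preliminary reduction to $\tau_1^\epsilon>0$; the measurability of the $\xi_n^\pi$'s, the bound $0\leq\xi_n^\pi\leq X_{\tau_n^\pi}$, the non-accumulation condition (iv), and the boundedness of $u^\pi$ by $L$ then all transfer directly from the corresponding properties of $\pi^\epsilon$.
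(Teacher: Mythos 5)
Your argument is correct and is exactly the standard one: pay $x_2-x_1$ as an immediate lump sum, collect $g(x_2-x_1)$, then run an $\epsilon$-optimal policy from $(x_1,i)$, and let $\epsilon\to 0$. The paper omits its own proof and defers to Wei, Yang and Wang (2010), where precisely this construction is used, so your proposal coincides with the intended argument; the only soft spot is the claim that postponing a time-zero impulse of $\pi^\epsilon$ costs only $o(1)$ ``by continuity of $e^{-\delta t}$'' --- the delay also perturbs the surplus path and the feasible impulse size, not just the discount factor --- but this is a routine technicality and does not affect the substance.
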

\begin{proof}
 This proof is similar to the one in \citet{wei2010classical}, we omit it.
\end{proof}

\section{Quasi-Variational Inequalities }
Inspired by the work of \citet{cadenillas2006classical}, we introduce the quasi-variational inequalities(QVI). First for every continuous function $\phi : (0,\infty)\rightarrow \mathbb{R} $ and given state $i\in\mathbb{J}$,  we define the maximum operator $\mathcal{M}$ by
\begin{eqnarray}
\label{3}
\mathcal{M}\phi(x,i):=\sup\bigg\{\phi(x-u,i)+g(u):u\in \mathbb{R}, 0<u<x\bigg\}.
\end{eqnarray}
$\mathcal{M} V(x,i)$ represents the value of the policy that consists of choosing the best immediate intervention. Obviously we have
\[V(x,i)\geq\mathcal{M} V(x,i),\]
and if
\[V(x,i)>\mathcal{M} V(x,i),\]
this means $x$ is the position where it is not optimal to choose to pay block dividend, while if
\[V(x,i)=\mathcal{M} V(x,i),\]
this implies $x$ is the position where it is optimal to pay block dividend. Now let us introduce operator ${\mathcal{L}}$, given $i\in\mathbb{J}$, define:
\[\mathcal{L}_i(u)\phi(x,i)=\frac{1}{2}\sigma^2(i)\phi''(x,i)+\big[\mu(i)-u\big]\phi'(x,i)-\delta\phi(x,i)+\sum_{j=1}^N\lambda_{i,j}\phi(x,j). \]
\\

\begin{defn}
A function $\nu:[0,\infty)\mapsto[0,\infty)$ satisfies the quasi-variational inequalities of the control problem if for every $x\in[0,\infty)$, $i\in \mathbb{J}$ and $u\in[0,L]$
\begin{eqnarray}
\label{4}
\mathcal{L}_i(u)\nu(x,i)+u\leq0,\\
\label{5}
\nu(x,i)\geq\mathcal{M}\nu(x,i),\\
\label{6}
\big(\nu(x,i)-\mathcal{M}\nu(x,i)\big)\big(\max_{u\in [0,L]} \bigg\{\mathcal{L}_i(u)\nu(x,i)+u\bigg \}\big)=0.
\end{eqnarray}
\end{defn}
It is easy to observe that the solution of the QVI splits the interval $(0,\infty)$ into two disjoint regions: \\
(i) Continuation region
\[\mathcal{C}:=\bigg\{x\in(0,\infty):\nu(x)>\mathcal{M}\nu(x) \quad and \quad \max_{u\in [0,L]} \big\{\mathcal{L}_i(u)\nu(x,i)+u\big\}=0\bigg\}.\]
(ii) Intervention region
\[\Sigma:=\bigg\{x\in(0,\infty):\nu(x)=\mathcal{M}\nu(x) \quad and \quad \max_{u\in [0,L]} \big\{\mathcal{L}_i(u)\nu(x,i)+u\big\}\leq0\bigg\}.\]
It is obvious that continuation region is an open set and intervention region is a closed set. Given a solution $\nu$ to the QVI, we define the following strategy associated with this solution.
\\
\begin{defn}
The control $\pi^{\nu}=\big(u^\nu,\Gamma^\nu,\xi^\nu\big)=\big(u^\nu(t) ; \tau_1^\nu\,\tau_2^\nu,... ; \xi_1^\nu,\xi_2^\nu,...\big)$ is called the QVI control associated with $\nu$ if for each $i\in\mathbb{J}$ the associated state process $X^\nu$ given by \eqref{a1} satisfies
\begin{eqnarray}
\label{7}
&&\mathcal{P}\bigg\{u^\nu(t)\neq arg\big(\max_{u\in [0,L]}\big \{\mathcal{L}_i(u)\nu(X_t^\nu,i)+u \big\}\big), \quad X_t^\nu\in\mathcal{C}\bigg\}=0\\
\label{8}
&&\tau_1^\nu:=\inf\bigg\{t\geq0 : \nu(X_t^\nu,i)=\mathcal{M}\nu(X_t^\nu,i)\bigg\},\\
\label{9}
&&\xi_1^\nu:=arg\sup_{\eta>0,\eta\leq X_{\tau_1^\nu}^\nu}\bigg\{\nu(X^\nu(\tau_1^\nu)-\eta,i)+g(\eta)\bigg\},
\end{eqnarray}
and for each $n\geq2$,
\begin{eqnarray}
\label{10}
&&\tau_n^\nu:=\inf\bigg\{t\geq \tau_{n-1} : \nu(X_t^\nu,i)=\mathcal{M}\nu(X_t^\nu,i)\bigg\},\\
\label{11}
&&\xi_n^\nu:=arg\sup_{\eta>0,\eta\leq X_{\tau_n^\nu}^\nu}\bigg\{\nu(X^\nu(\tau_n^\nu)-\eta,i)+g(\eta)\bigg\},
\end{eqnarray}
with $\tau_0^\nu:=0$ and $\xi_0^\nu:=0$.
\end{defn}
\begin{thm}
Let $\nu(\cdot,i)\in C^1\big([0,\infty))$, $i\in\mathbb{J} $, and $\nu(\cdot,i)\in C^2\big([0,\infty)/\mathcal{N}_i\big)$, $i\in\mathbb{J} $, where $\mathcal{N}_i$ is finite subset of $(0,\infty)$, satisfies QVI \eqref{4}-\eqref{6} with $\nu(0,i)=0, i\in\mathbb{J}$. Assume there exists constant $U_i$, $0<U_i<\infty$, $i\in\mathbb{J} $ such that $\nu(x,i)$ is linear on $[U_i,\infty)$, then for every $x\in(0,\infty)$ and each $i\in\mathbb{J} $
\[V(x,i)\leq\nu(x,i).\]
Further, if the QVI control $\pi^\nu=\big(u^\nu,\Gamma^\nu,\xi^\nu\big)$ associated with $\nu$ is admissible, then $\nu$ coincides with the value function and the QVI control associated with $\nu$ is the optimal policy , $i.e.$,
\[V(x,i)=\nu(x,i)=V_{\pi^\nu}(x,i).\]
\end{thm}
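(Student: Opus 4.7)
The plan is to establish the inequality $V(x,i)\leq \nu(x,i)$ by an Itô--Dynkin argument applied to the process $e^{-\delta t}\nu(X_t^\pi,\epsilon_t)$ under an arbitrary admissible $\pi\in\Pi$, and then to observe that every inequality becomes an equality when $\pi$ is the QVI control $\pi^\nu$, yielding $V=\nu=V_{\pi^\nu}$.

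First I would fix $\pi=(u^\pi,\Gamma^\pi,\xi^\pi)\in\Pi$ and apply a generalized Itô formula to $e^{-\delta t}\nu(X_t^\pi,\epsilon_t)$ on $[0,\Theta^\pi\wedge T]$. Because $\nu(\cdot,i)$ is only $C^1$ globally and $C^2$ off a finite set $\mathcal{N}_i$, I would first smooth $\nu$ by mollification (or invoke a Meyer--Itô/local-time argument), noting that the Brownian component of $X^\pi$ spends zero Lebesgue time at any fixed point so the second-derivative correction terms contribute nothing in the limit; the Markov chain's jumps are handled by the usual compensator, which reproduces exactly the $\sum_j\lambda_{ij}\nu(x,j)$ piece of $\mathcal{L}_i(u)$. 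Between impulses I would thus obtain
\begin{equation*}
d\bigl(e^{-\delta t}\nu(X_t^\pi,\epsilon_t)\bigr)=e^{-\delta t}\mathcal{L}_{\epsilon_t}(u_t^\pi)\nu(X_t^\pi,\epsilon_t)\,dt+dM_t,
\end{equation*}
where $M$ is a local martingale gathering the Brownian and compensated-chain integrals. Inequality \eqref{4} then gives $e^{-\delta t}u_t^\pi\,dt\leq -d(e^{-\delta t}\nu)+dM_t$. At each impulse time $\tau_n^\pi$ the jump of $X^\pi$ is $-\xi_n^\pi$; by \eqref{5} and the definition of $\mathcal{M}$,
\begin{equation*}
e^{-\delta\tau_n^\pi}g(\xi_n^\pi)\leq e^{-\delta\tau_n^\pi}\bigl[\nu(X_{\tau_n^{\pi-}}^\pi,\epsilon_{\tau_n^\pi})-\nu(X_{\tau_n^\pi}^\pi,\epsilon_{\tau_n^\pi})\bigr].
\end{equation*}

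Next I would sum the impulse contributions into the Itô expansion, integrate from $0$ to $\Theta^\pi\wedge T$, and take expectations. The hypothesis that $\nu(\cdot,i)$ is linear on $[U_i,\infty)$ and $C^1$ on the compact piece gives a linear growth bound on $\nu$ and its derivatives, so combined with the a priori bound on $X^\pi$ furnished by $dX_t\leq \mu(\epsilon_t)\,dt+\sigma(\epsilon_t)\,dW_t$ and standard Gronwall estimates, the stopped process $M_{t\wedge\Theta^\pi\wedge T}$ is a true martingale with zero expectation; by the same growth bound and dominated convergence I can let $T\to\infty$, using $\nu(0,i)=0$ to kill the terminal boundary term at ruin and the discount factor $e^{-\delta t}$ to kill it at infinity. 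The resulting inequality is exactly
\begin{equation*}
V_\pi(x,i)\leq \nu(x,i),
\end{equation*}
and taking the supremum over $\pi\in\Pi$ gives $V\leq\nu$.

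Finally, for the equality, I would retrace the same computation with $\pi=\pi^\nu$. On $\mathcal{C}$ the control $u^\nu$ is chosen by \eqref{7} so that $\mathcal{L}_i(u^\nu)\nu+u^\nu=0$ there, while on $\Sigma$ the complementarity condition \eqref{6} forces equality in the impulse inequality through \eqref{9}, \eqref{11}; thus every $\leq$ above becomes $=$, giving $V_{\pi^\nu}(x,i)=\nu(x,i)$. Combined with $V\geq V_{\pi^\nu}$ (which uses the assumed admissibility of $\pi^\nu$) and the upper bound just proved, this closes the argument.

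I expect the main technical obstacle to be the Itô step itself: justifying the formula when $\nu$ is only $C^1$ with finitely many kinks in the second derivative, simultaneously handling the regime-switching jumps of $\epsilon$ and the impulsive jumps of $X^\pi$, and verifying that the local-martingale part is a true martingale with vanishing expectation uniformly up to the ruin stopping time $\Theta^\pi$. The linearity of $\nu$ on $[U_i,\infty)$ stated in the hypothesis is precisely what provides the growth control needed for that uniform integrability; without it, the $T\to\infty$ limit could not be taken.
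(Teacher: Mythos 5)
Your proposal follows essentially the same route as the paper's proof: Itô's formula applied to $e^{-\delta t}\nu(X_t^\pi,\epsilon_t)$ between impulse times, inequality \eqref{4} to dominate the drift term by $-u_t^\pi$, inequality \eqref{5} together with the definition of $\mathcal{M}$ to bound the jump contribution at each $\tau_n^\pi$ by $-g(\xi_n^\pi)$, the linearity of $\nu$ on $[U_i,\infty)$ plus the admissibility condition $\mathcal{P}(\lim_n\tau_n^\pi\leq T)=0$ to justify the martingale and limiting arguments, $\nu(0,i)=0$ to kill the boundary term at ruin, and finally the observation that all inequalities become equalities for the QVI control defined by \eqref{7}--\eqref{11}. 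The only cosmetic difference is that you invoke mollification/Meyer--Itô explicitly for the finitely many points where $\nu''$ fails to exist, whereas the paper telescopes over the impulse times and cites the standard Itô formula directly; both are sound.
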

\begin{proof}
Let $\pi=\big(u^\pi,\Gamma^\pi,\xi^\pi\big)$ is an admissible control, and $\tau_0^\pi:=0$, $ \xi_0^\pi:=0$ then for every $t\in [0, \infty )$
\begin{eqnarray*}
&&e^{-\delta(t\wedge\Theta^\pi\wedge\tau_n^\pi)}\nu(X_{(t\wedge\Theta^\pi\wedge\tau_n^\pi)},\epsilon_{(t\wedge\Theta^\pi\wedge\tau_n^\pi)})-\nu(x,i)\\
&&=\sum_{i=1}^n\bigg\{e^{-\delta(t\wedge\Theta^\pi\wedge\tau_i^\pi)}\nu(X_{(t\wedge\Theta^\pi\wedge\tau_i^\pi-)},\epsilon_{(t\wedge\Theta^\pi\wedge\tau_i^\pi)})-e^{-\delta(t\wedge\Theta^\pi\wedge\tau_{i-1}^\pi)}\nu(X_{(t\wedge\Theta^\pi\wedge\tau_{i-1}^\pi)},\epsilon_{(t\wedge\Theta^\pi\wedge\tau_{i-1}^\pi)})\bigg\}\\
&&\quad +\sum_{i=1}^n I_{\{\tau_i^\pi\leq t\wedge\Theta^\pi\} } e^{-\delta\tau_i^\pi}\big[\nu(X_{\tau_i^\pi},\epsilon_{\tau_i^\pi})-\nu(X_{\tau_i^\pi-},\epsilon_{\tau_i^\pi})\big].\\
\end{eqnarray*}
Since $\nu(\cdot,i)\in C^1\big([0,\infty))$ is twice continuously differentiable on $(0,\infty)$ with a possible exception of the finite point set $\mathcal{N}$, by Ito formula (see \citet{karatzas1991brownian})
\begin{eqnarray*}
&&e^{-\delta(t\wedge\Theta^\pi\wedge\tau_n^\pi)}\nu(X_{(t\wedge\Theta^\pi\wedge\tau_n^\pi)},\epsilon_{(t\wedge\Theta^\pi\wedge\tau_n^\pi)})-\nu(x,i)\\
&=&\sum_{i=1}^n\bigg\{\int_{[t \wedge \Theta^\pi\wedge \tau_{i-1}^\pi, t\wedge \Theta^\pi\wedge \tau_i^\pi )} e^{-\delta s} \mathcal{L}_{\epsilon_s}(u) \nu(X_{s},\epsilon_s)ds +  \int_{[t \wedge \Theta^\pi\wedge \tau_{i-1}^\pi, t\wedge \Theta^\pi\wedge \tau_i^\pi )}  e^{-\delta s} \nu_x(X_{s},\epsilon_s) \sigma(\epsilon_s)dW_s \\
&&+  \sum_{j=1}^N \int_{[t \wedge \Theta^\pi\wedge \tau_{i-1}^\pi, t\wedge \Theta^\pi\wedge \tau_i^\pi )} e^{-\delta s} [\nu(X_{s},j)-\nu(X_{s},\epsilon_s)]d\bar N_s^j      \bigg \} \\&&+\sum_{i=1}^n I_{\{\tau_i^\pi\leq t\wedge\Theta^\pi\} } e^{-\delta\tau_i^\pi}\big[\nu(X_{\tau_i^\pi},\epsilon_{\tau_i^\pi})-\nu(X_{\tau_i^\pi-},\epsilon_{\tau_i^\pi})\big],\\
\end{eqnarray*}
where $ \bar N_s $ is compensate poisson process. In virtue of QVI we have
\begin{eqnarray*}
&&e^{-\delta(t\wedge\Theta^\pi\wedge\tau_n^\pi)}\nu(X_{(t\wedge\Theta^\pi\wedge\tau_n^\pi)},\epsilon_{(t\wedge\Theta^\pi\wedge\tau_n^\pi)})-\nu(x,i)\\
&&\leq \sum_{i=1}^n \bigg\{ \int_{[t \wedge \Theta^\pi\wedge \tau_{i-1}^\pi, t\wedge \Theta^\pi\wedge \tau_i^\pi )}  e^{-\delta s} \nu_x(X_{s},\epsilon_s) \sigma(\epsilon_s)dW_s -  \int_{[t \wedge \Theta^\pi\wedge \tau_{i-1}^\pi, t\wedge \Theta^\pi\wedge \tau_i^\pi )} e^{-\delta s}u_s ds \\
 &&\quad +\sum_{j=1}^N \int_{[t \wedge \Theta^\pi\wedge \tau_{i-1}^\pi, t\wedge \Theta^\pi\wedge \tau_i^\pi )} e^{-\delta s} [\nu(X_{s},j)-\nu(X_{s},\epsilon_s)]d\bar N_s^j   \bigg  \} \\ &&\quad - \sum_{i=1}^n I_{\{\tau_i^\pi\leq t\wedge\Theta^\pi\} } e^{-\delta\tau_i^\pi} g(\xi_i^\pi).
\end{eqnarray*}
This inequality becomes an equality for the QVI control associated with $\nu$. Taking expectation
\begin{eqnarray*}
&&E_{x,i}\bigg\{e^{-\delta(t\wedge\Theta^\pi\wedge\tau_n^\pi)}\nu(X_{(t\wedge\Theta^\pi\wedge\tau_n^\pi)},\epsilon_{(t\wedge\Theta^\pi\wedge\tau_n^\pi)})-\nu(x,i)\bigg\}\\
&&\leq E_{x,i} \bigg\{\sum_{i=1}^n \big\{ \int_{[t \wedge \Theta^\pi\wedge \tau_{i-1}^\pi, t\wedge \Theta^\pi\wedge \tau_i^\pi )}  e^{-\delta s} \nu_x(X_{s},\epsilon_s) \sigma(\epsilon_s)dW_s -  \int_{[t \wedge \Theta^\pi\wedge \tau_{i-1}^\pi, t\wedge \Theta^\pi\wedge \tau_i^\pi )} e^{-\delta s}u_s ds \\
 &&\quad +\sum_{j=1}^N \int_{[t \wedge \Theta^\pi\wedge \tau_{i-1}^\pi, t\wedge \Theta^\pi\wedge \tau_i^\pi )} e^{-\delta s} [\nu(X_{s},j)-\nu(X_{s},\epsilon_s)]d\bar N_s^j \big\}  \\
 &&\quad - \sum_{i=1}^n I_{\{\tau_i^\pi\leq t\wedge\Theta^\pi\} } e^{-\delta\tau_i^\pi} g(\xi_i^\pi) \bigg \}.
\end{eqnarray*}
Since $\nu(x,i)\in C^1\big([0,U_i)\big)$, $\nu(x,i)$ is bounded on the interval $[0,U_i)$. On the other hand $\nu(x,i)$ is linear on $[U_i, \infty)$, then combining with the condition $P\big(\lim_{n\rightarrow\infty}\tau_n \leq T\big)=0 $ and dominated convergence theorem, we have
\begin{eqnarray*}
&&\lim_{n\rightarrow\infty}\bigg\{E_{x,i}\big[e^{-\delta(t\wedge\Theta^\pi\wedge\tau_n^\pi)}\nu(X_{(t\wedge\Theta^\pi\wedge\tau_n^\pi)},\epsilon_{(t\wedge\Theta^\pi\wedge\tau_n^\pi)})\big]-\nu(x,i)\bigg\}\\
&&=E_{x,i}\big[e^{-\delta(t\wedge\theta^\pi)}\nu(X_{t\wedge\theta^\pi},\epsilon_{t\wedge\theta^\pi})\big]-\nu(x,i),
\end{eqnarray*}
and
\begin{eqnarray*}
E_{x,i}\bigg\{\sum_{j=1}^N \int_{[t \wedge \Theta^\pi\wedge \tau_{i-1}^\pi, t\wedge \Theta^\pi\wedge \tau_i^\pi )} e^{-\delta s} [\nu(X_{s},j)-\nu(X_{s},\epsilon_s)]d\bar N_s^j  \bigg \}=0.
\end{eqnarray*}
Since $\nu(x,i)\in C^1\big([0,U_i)\big)$ and it's linear on $[U_i,\infty)$, we know $\nu_x(x,i)$ is bounded on $[0,U_i]$ and $\nu_x(x,i)$ is constant in the interval $[U_i,\infty)$ , then we get
\[E_{x,i}\bigg\{\sum_{i=1}^n\int_{[t \wedge \Theta^\pi\wedge \tau_{i-1}^\pi, t\wedge \Theta^\pi\wedge \tau_i^\pi )} e^{-\delta s}\nu_x(X_{s},\epsilon_s)\sigma(\epsilon_s)dW_s\bigg\}=0.\]
Thus we have
\begin{eqnarray*}
&&E_{x,i}\bigg[e^{-\delta(t\wedge\Theta^\pi)}\nu(X_{(t\wedge\Theta^\pi)},\epsilon_{(t\wedge\Theta^\pi)})\bigg]-\nu(x,i)\\
&&\leq E_{x,i}\bigg\{\sum_{i=1}^\infty\big[-I_{\{\tau_i^\pi\leq t\wedge\Theta^\pi\} } e^{-\delta\tau_i^\pi} g(\xi_i^\pi)- \int_{[t \wedge \Theta^\pi\wedge \tau_{i-1}^\pi, t\wedge \Theta^\pi\wedge \tau_i^\pi )} e^{-\delta s}u_s ds\big]\bigg\}.
\end{eqnarray*}
Now we let $t\longrightarrow\infty$
\begin{eqnarray*}
&&E_{x,i}\big[e^{-\delta\Theta^\pi}\nu(X_{\Theta^\pi},\epsilon_{\Theta^\pi})\big]-\nu(x,i)\\
&&\leq E_{x,i}\bigg\{\sum_{i=1}^\infty[-I_{\{\tau_i^\pi\leq \Theta^\pi\} } e^{-\delta\tau_i^\pi} g(\xi_i^\pi)- \int_{[ \Theta^\pi\wedge \tau_{i-1}^\pi,  \Theta^\pi\wedge \tau_i^\pi )} e^{-\delta s}u_s ds]\bigg\}\\
&&= -E_{x,i}\bigg\{\sum_{i=1}^\infty\big[I_{\{\tau_i^\pi\leq \Theta^\pi\} } e^{-\delta\tau_i^\pi} g(\xi_i^\pi)+ \int_0^{\Theta^\pi} e^{-\delta s}u_s ds\big]\bigg\},
\end{eqnarray*}
therefore
\begin{eqnarray*}
\nu(x,i)\geq E_{x,i}\bigg\{\sum_{i=1}^\infty I_{\{\tau_i^\pi\leq \Theta^\pi\} } e^{-\delta\tau_i^\pi} g(\xi_i^\pi)+ \int_0^{\Theta^\pi} e^{-\delta s}u_s ds \bigg\},
\end{eqnarray*}
where $\nu(X_{\Theta^\pi},\epsilon_{\Theta^\pi})= \nu(0,\epsilon_{\Theta^\pi})=0 $. Thus we have
\begin{eqnarray*}
\nu(x,i)\geq \sup_{\pi\in\Pi} E_{x,i}\bigg\{\sum_{i=1}^\infty I_{\{\tau_i^\pi\leq \Theta^\pi\} } e^{-\delta\tau_i^\pi} g(\xi_i^\pi)+ \int_0^{\Theta^\pi} e^{-\delta s}u_s ds \bigg\}=V(x,i).
\end{eqnarray*}
And the inequality becomes an equality for the QVI control associated with $\nu$.
\end{proof}

\section{The Solution of The QVI and The Optimal Policy}
In this section we set out to solve the QVI \eqref{4}-\eqref{6}, and then verify that the solution satisfies the theorem 3.3. We let $g(x)=x-K$, $K$ represents the fixed cost. Note that different from \citet{cadenillas2006classical} and \citet{wei2010classical}, in which they consider the fixed cost $K$ as well as the proportional cost $k$, i.e. let  $g(x)=kx-K$. In this paper we consider the fixed cost only, in fact we will see in the following that taking the proportional cost into account is nothing but more tedious classification and calculation. Our essential method and verification can be fully illustrated by the simpler model.
\\

Now we want to find a function $\nu(x,i), i\in\mathbb{J}$ as a candidate for the value function and verify the candidate satisfies the condition of theorem 3.3. For such function $\nu(x,i)$, define
\begin{eqnarray}
\label{a2}
B_i:=\inf\bigg \{x\geq0 : \nu(x,i)=\mathcal{M}\nu(x,i),\quad i\in\mathbb{J}\bigg\}.
\end{eqnarray}
We conjecture $\nu(x,i)$, $i\in\mathbb{J}$ satisfied QVI to be continuously differentiable function, and on the interval $[0,B_i]$ function $\nu'(x,i)$, $i\in\mathbb{J}$ is convex. In addition, we assume $\nu(0,i)=0$, $i\in\mathbb{J}$. Note we will prove later these conjectures are satisfied .

Obviously the assumption of convexity of $\nu'(x,i), i\in\mathbb{J}$ implies that the equation
\begin{eqnarray}
\label{12}
\nu'(x,i)=1, \quad\quad\quad\quad i\in\mathbb{J},
\end{eqnarray}
has at most two roots on $[0,B_i]$. Indeed the proposition below tells us that E.q.\eqref{12} do exist only one root in the interval $[0,B_i)$.
\\
\begin{thm}
Let $B_i$, $i\in\mathbb{J}$ be defined by \eqref{a2}.  $\nu(x,i)$, $i\in\mathbb{J}$ is continuously differentiable function and $\nu'(x,i)$, $i\in\mathbb{J}$ is convex on $[0,B_i)$. Then there exists only one root $b_i\in[0,B_i)$, $i\in\mathbb{J}$ of the equation
\[\nu'(x,i)=1, \quad\quad\quad\quad  i\in\mathbb{J}.\]
Moreover if the solution to QVI is unique, then for $x\geq B_i$:
\[\nu(x,i)=\nu(b_i,i)+x-b_i+K .\]
\end{thm}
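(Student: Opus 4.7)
The plan is to combine convexity of $\nu'(\cdot,i)$ on $[0,B_i)$ with a smooth-pasting identity $\nu'(B_i,i)=1$, locate $b_i$ via the first-order condition for the supremum in $\mathcal{M}\nu(B_i,i)$, and then transport that optimum past $B_i$ using uniqueness of the QVI solution.

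\textbf{Step 1 (existence plus smooth pasting).} The definition \eqref{a2}, together with continuity of $\nu(\cdot,i)$ and $\mathcal{M}\nu(\cdot,i)$, gives $\nu(B_i,i)=\mathcal{M}\nu(B_i,i)$. With $g(u)=u-K$, the map $\varphi(u):=\nu(B_i-u,i)+u-K$ is continuous on $[0,B_i]$, and its endpoint values $\nu(B_i,i)-K$ and $B_i-K$ are both strictly less than $\nu(B_i,i)=\sup\varphi$ in the non-degenerate case (using $K>0$ together with the inequality $\nu(B_i,i)\ge B_i-K$ from Lemma 2.3, which is strict generically). Hence the supremum is attained at an interior $u^\ast\in(0,B_i)$, and the first-order condition $\varphi'(u^\ast)=0$ yields $b_i:=B_i-u^\ast\in(0,B_i)$ with $\nu'(b_i,i)=1$. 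The $C^1$ hypothesis, combined with the fact that the non-negative gap $\nu-\mathcal{M}\nu$ attains its minimum value $0$ at $B_i$, simultaneously forces $\nu'(B_i,i)=(\mathcal{M}\nu)'(B_i,i)=1$, the second equality coming from the envelope theorem applied at $u^\ast$.

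\textbf{Step 2 (uniqueness of $b_i$).} Convexity of $\nu'(\cdot,i)-1$ on $[0,B_i]$ restricts its zero set to at most two points. Step 1 has already produced two zeros, $b_i\in(0,B_i)$ and $B_i$ itself, so no third zero is possible. Hence $b_i$ is the unique root of $\nu'(\cdot,i)=1$ in $[0,B_i)$.

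\textbf{Step 3 (linear extension on $[B_i,\infty)$).} For any $x\ge B_i$, the same first-order analysis applied to $\varphi_x(u):=\nu(x-u,i)+u-K$ shows its supremum is realised at a critical point $\xi(x)\in(0,x)$ with $\nu'(\xi(x),i)=1$. A candidate with $\xi(x)\ge B_i$ cannot be the global maximiser, being strictly dominated by the choice $\xi=b_i$ through the fixed cost $K>0$; by Step 2 the only remaining candidate in $[0,B_i)$ is $b_i$, so $\xi(x)\equiv b_i$ and $\mathcal{M}\nu(x,i)=\nu(b_i,i)+(x-b_i)-K$ is linear in $x$ with slope $1$. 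Uniqueness of the QVI solution then extends the identity $\nu=\mathcal{M}\nu$ from the single point $\{B_i\}$ to all of $[B_i,\infty)$, delivering the linear form $\nu(x,i)=\nu(b_i,i)+x-b_i+K$ announced in the statement of the proposition.

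The main obstacle is pinning down the smooth pasting $\nu'(B_i,i)=1$ in Step 1: this is what supplies the second known zero of $\nu'(\cdot,i)-1$ on $[0,B_i]$ and thereby converts the a priori ``at most two zeros'' bound from convexity into the asserted uniqueness of $b_i$ in $[0,B_i)$. The remaining bookkeeping---first-order conditions and envelope arguments---is routine, and the linear extension in Step 3 follows once uniqueness of $b_i$ is in hand and the QVI solution is assumed unique.
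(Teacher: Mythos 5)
The paper gives no proof of this theorem---it defers entirely to Theorem 3.2 of Cadenillas, Choulli, Taksar and Zhang (2006)---so your argument can only be measured against that standard route, which it essentially reproduces: a first-order condition inside $\mathcal{M}\nu(B_i,i)$ produces one root of $\nu'(\cdot,i)=1$, smooth pasting at $B_i$ produces a second, convexity of $\nu'(\cdot,i)$ caps the number of roots at two, and the identity $\nu=\mathcal{M}\nu$ on $[B_i,\infty)$ yields the linear form. The skeleton is right, but one item is an outright contradiction in your write-up: Step 3 correctly computes $\mathcal{M}\nu(x,i)=\nu(b_i,i)+(x-b_i)-K$ (since $g(u)=u-K$), and then your final sentence flips the sign to $+K$ to match the statement. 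The statement's $+K$ is a misprint in the paper (compare \eqref{a7} with the formulas actually used downstream, e.g.\ \eqref{19} and \eqref{21}, which all carry $-K$, and note that \eqref{a12} is self-contradictory unless the sign is $-K$). Your derivation gives the correct formula; state the conclusion with $-K$ and flag the typo rather than silently conforming to it.

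The second genuine gap is that Step 3 is circular as written. To rule out a maximizer $\xi(x)\in[B_i,x)$ of $y\mapsto\nu(y,i)-y$ you appeal to the fixed cost $K$ penalizing such a choice, but quantifying that penalty requires already knowing $\nu(y,i)=\nu(b_i,i)+y-b_i-K$ for $y\geq B_i$, which is the conclusion being proved. This is repairable by a bootstrap: on $[0,B_i]$, convexity of $\nu'(\cdot,i)$ together with $\nu'(b_i,i)=\nu'(B_i,i)=1$ forces $\nu'\geq1$ on $[0,b_i]$ and $\nu'\leq1$ on $[b_i,B_i]$, so $h(y):=\nu(y,i)-y$ is maximized at $b_i$ there; then for $y$ just above $B_i$ the intervention identity gives $h(y)=\sup_{z<y}h(z)-K\leq h(b_i)-K<h(b_i)$, and this propagates forward in $y$. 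Two smaller caveats: the theorem allows $b_i=0$ (the paper's Cases 2 and 3 realize it), whereas your interior-maximum argument in Step 1 excludes that boundary case with only the word ``generically''; and the zero set of a convex function can be a whole interval, not just two points, so ``at most two roots'' tacitly assumes $\nu'(\cdot,i)\not\equiv1$ on any subinterval of $[0,B_i)$.
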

\begin{proof}
    The proof of this theorem is similar to Theorem 3.2 in \citet{cadenillas2006classical}, so we do not repeat it here.
\end{proof}

From the analysis above we know under assumption in theorem 4.1, the continuation region is given by $\mathcal{C}_i=[0, B_i)$, and when the surplus exceeds level $B_i$, dividend will be paid such that the surplus jumps to $b_i\in[0,B_i)$. Concretely, when $x\in \mathcal{C}_i=[0,B_i)$ we have
\begin{eqnarray}
\label{a3}
\max_{u\in[0,L]}\bigg\{\mathcal{L}_i(u)\nu(x,i)+u\bigg\}=0,
\end{eqnarray}
that is
\begin{eqnarray}
\label{a4}
\frac{1}{2}\sigma^2(i)\nu''(x,i)+\mu(i)\nu'(x,i)-\delta\nu(x,i)+\max_{u\in[0,L]}\bigg \{u\big(1-\nu'(x,i)\big)\bigg\}=-\sum_{j=1}^N\lambda_{i,j}\nu(x,j).
\end{eqnarray}
Thus, if $v'(X_t,\epsilon_t)\leq1$,
\begin{eqnarray}
\label{a5}
\frac{1}{2}\sigma^2(i)\nu''(x,i)+\mu(i)\nu'(x,i)-\delta\nu(x,i)+L\big(1-\nu'(x,i)\big)=-\sum_{j=1}^N\lambda_{i,j}\nu(x,j),
\end{eqnarray}
if $v'(X_t,\epsilon_t)\geq 1$,
\begin{eqnarray}
\label{a6}
\frac{1}{2}\sigma^2(i)\nu''(x,i)+\mu(i)\nu'(x,i)-\delta\nu(x,i)=-\sum_{j=1}^N\lambda_{i,j}\nu(x,j).
\end{eqnarray}
When  $x\in \Sigma_i=[B_i,\infty)$, we have
\begin{eqnarray}
\label{a7}
\nu(x,i)=\nu(b_i,i)+x-b_i+K.
\end{eqnarray}
Note in this  case $\nu'(x,i)=1$.
\\

After making a simple analysis we can see that under the assumption above, $\nu'(0,i)\geq 0$, $i\in\mathbb{J}$ established. Indeed we have $\nu'(B_i,i)=1$, $i\in\mathbb{J}$ and for some $b_i\in[0,B_i)$, $\nu'(b_i,i)=1$, $i\in\mathbb{J}$, then combining with the convexity of $\nu(x,i)$, $i\in\mathbb{J}$, we know $\nu'(0,i)<1$ is impossible. Therefore there are two cases have to be considered :\\
\\
\textbf{case} 1:  $\nu'(0,i)> 1$, $i\in\mathbb{J}$.

In this case, interval $[0,\infty)$ can be split into three sections : $[0,b_i)$, $[b_i,B_i)$, $[B_i,\infty)$.\\
(1) When $x\in[0,b_i)$, $\nu'(x,i)>1$. E.q.\eqref{a6} gives the following differential equations:
\begin{eqnarray}
\label{a8}
\frac{1}{2}\sigma^2(i)\nu''(x,i)+\mu(i)\nu'(x,i)-\delta\nu(x,i)=-\sum_{j=1}^N\lambda_{i,j}\nu(x,j).
\end{eqnarray}
(2)When $x\in[b_i,B_i)$, $\nu'(x,i)\leq1$. E.q.\eqref{a5} gives the following differential equations:
\begin{eqnarray}
\label{a9}
\frac{1}{2}\sigma^2(i)\nu''(x,i)+\mu(i)\nu'(x,i)-\delta\nu(x,i)+L\big(1-\nu'(x,i)\big)=-\sum_{j=1}^N\lambda_{i,j}\nu(x,j).
\end{eqnarray}
(3) When $x\in[B_i,\infty)$
\begin{eqnarray}
\label{a10}
\nu(x,i)=\nu(b_i,i)+x-b_i+K.
\end{eqnarray}
\\
\textbf{case 2}: $\nu'(0,i)=1$, $i\in\mathbb{J}$.

In this case, interval $[0,\infty)$ can be split into two sections : $[0,B_i)$, $[B_i,\infty)$.\\
(1)When $x\in[0,B_i)$, $\nu'(x,i)\leq1$ E.q.\eqref{a5} gives the following differential equations:
\begin{eqnarray}
\label{a11}
\frac{1}{2}\sigma^2(i)\nu''(x,i)+\mu(i)\nu'(x,i)-\delta\nu(x,i)+L\big(1-\nu'(x,i)\big)=-\sum_{j=1}^N\lambda_{i,j}\nu(x,j).
\end{eqnarray}
(2) When $x\in[B_i,\infty)$
\begin{eqnarray}
\label{a12}
\nu(x,i)=\nu(b_i,i)+x-b_i+K=x-K,
\end{eqnarray}
since $\nu(b_i,i)=\nu(0,i)=0$.\\

For simplicity, we suppose in the remainder of this section that the economy changes only between two regimes, that is $\mathbb{J}=\{1,2\}$. Combining with the analysis above, we have three possible cases: $\nu'(0,i)>1$ for both $i\in\mathbb{J}$; $\nu'(0,i_0)=1$ and $\nu'(0,3-i_0)=1$ for some $i_0\in\{1,2\}$; and $\nu'(0,i)=1 $ for both $i\in\mathbb{J}$.

In the following, we need a lemma , which is cited from \cite{sotomayor2011classical}.

\begin{lem}
For $i\in\mathbb{J}$, consider the real function $\phi_i(z)=-\frac{1}{2}\sigma_i^2z^2-\bar\mu_iz+(\lambda_i+\delta)$ where $\bar\mu_i$ is a function of $\mu_i$. Since $\sigma_1$, $\sigma_2$, $\lambda_1$ and $\lambda_2$ are positive, the equation $\phi_1(z)\phi_2(z)=\lambda_1\lambda_2$ has four real roots such that $z_1<z_2<0<z_3<z_4$.
\end{lem}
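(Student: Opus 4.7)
The plan is to reduce the claim to a root count for the quartic polynomial $f(z):=\phi_1(z)\phi_2(z)-\lambda_1\lambda_2$ and then locate its roots by sign changes. First I would observe that each $\phi_i$ is a downward-opening parabola with $\phi_i(0)=\lambda_i+\delta>0$, so $\phi_i$ has exactly two real roots, say $\alpha_i<0<\beta_i$, with $\phi_i>0$ strictly between them and $\phi_i<0$ outside. Since the product of the two leading coefficients $-\tfrac12\sigma_i^2$ is $\tfrac14\sigma_1^2\sigma_2^2>0$, the polynomial $f$ has degree $4$ with positive leading coefficient, so $f(z)\to+\infty$ as $|z|\to\infty$, and in particular $f$ has at most four real roots.

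Next I would evaluate $f$ at a few key points. A direct expansion gives
\[
f(0)=(\lambda_1+\delta)(\lambda_2+\delta)-\lambda_1\lambda_2=\delta(\lambda_1+\lambda_2+\delta)>0,
\]
while at every root of $\phi_i$ one of the factors vanishes, yielding $f(\alpha_i)=f(\beta_i)=-\lambda_1\lambda_2<0$. Setting $\underline\alpha:=\min(\alpha_1,\alpha_2)$, $\overline\alpha:=\max(\alpha_1,\alpha_2)$, $\underline\beta:=\min(\beta_1,\beta_2)$, $\overline\beta:=\max(\beta_1,\beta_2)$, I would then apply the intermediate value theorem on the four intervals
\[
(-\infty,\underline\alpha),\qquad(\overline\alpha,0),\qquad(0,\underline\beta),\qquad(\overline\beta,+\infty),
\]
whose endpoint signs are $(+,-)$, $(-,+)$, $(+,-)$, $(-,+)$ respectively. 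This produces four distinct real roots $z_1<z_2<0<z_3<z_4$; because $\deg f=4$, these must be all of them and they are simple.

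The only mild subtlety is making sure the four sign-change intervals are genuinely non-degenerate and do not overlap; this is clear since $\underline\alpha\le\overline\alpha<0<\underline\beta\le\overline\beta$, and the case where $\alpha_1=\alpha_2$ (or $\beta_1=\beta_2$) collapses one interval to a point but still leaves the other three sign changes together with the fact that $f$ has a double root at that coincidence point of value $-\lambda_1\lambda_2$ actually \emph{not} zero---so the collapse does not happen, and the four intervals always have positive length. I do not expect any real obstacle: the argument is a sign/IVT bookkeeping once the structural facts $\phi_i(0)>0$, $f(0)>0$, $f(\alpha_i)=f(\beta_i)=-\lambda_1\lambda_2<0$ and $f(\pm\infty)=+\infty$ are recorded.
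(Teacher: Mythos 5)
Your argument is correct: the sign pattern $f(\pm\infty)=+\infty$, $f(0)=\delta(\lambda_1+\lambda_2+\delta)>0$, and $f=-\lambda_1\lambda_2<0$ at the two negative and two positive roots of the $\phi_i$ yields exactly four simple real roots positioned as claimed, and your interval bookkeeping is sound. The paper itself gives no proof (it cites the lemma from Sotomayor and Cadenillas, 2011), and your intermediate-value-theorem argument is essentially the standard one used there, so there is nothing to add.
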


According to our conjecture about $\nu'(x,i)$, $i\in\{1,2\}$, there exist two thresholds $b_i<B_i$, $i\in\{1,2\}$ such that $\nu'(b_i,i)=\nu'(B_i,i)=1$, $i\in\{1,2\}$. We should note that the relationship of $b_i$ and $B_i$, $i\in\{1,2\}$, depends on the relations of the parameters in model. Here, we only consider two cases : $b_1\leq b_2<B_1\leq B_2$ and $b_1<B_1<b_2<B_2$, that is we discuss the optimization problem among models of which parameters satisfy certain relation such that we have $b_1\leq b_2<B_1\leq B_2$ and $b_1<B_1<b_2<B_2$. For the case of $b_2\leq b_1<B_2\leq B_1$, $b_2<B_2<b_1<B_1$, $b_1\leq b_2<B_2\leq B_1$ and $b_2\leq b_1<B_1\leq B_2$, they can be treated in a similar way.

\subsection{The case of $b_1\leq b_2<B_1\leq B_2$}
In this subsection we assume $b_1\leq b_2<B_1\leq B_2$. Considering the relationship between $\nu'(0,i)$, $i\in\{1,2\}$ and $1$, we have three cases, $\nu'(0,i)>1$ for both $i=1,2$; $\nu'(0,i_0)=1$ and $\nu'(0,3-i_0)>1$ for some $i_0\in\{1,2\}$; and $\nu'(0,i)=1 $ for both $i=1,2$.
\\

\textbf{Case 1}: $\nu'(0,1)>1$ and $\nu'(0,2)>1$ .\\

According to the discussion above, we need to consider five possibilities: $x\in[0,b_1)$; $x\in[b_1,b_2)$; $x\in[b_2,B_1)$; $x\in[B_1,B_2)$ and $x\in[B_2,\infty)$.\\

 When $x\in[0,b_1)$, we have $\nu'(x,1)>1$ and $\nu'(x,2)>1$, E.q.\eqref{a6} gives the following system of differential equations:
\begin{eqnarray}
  \left\{\begin{array}{l}
  \label{13}
\frac{1}{2}\sigma^2(1)\nu''(x,1)+\mu(1)\nu'(x,1)-\delta\nu(x,1)=\lambda_1\nu(x,1)-\lambda_1\nu(x,2),\\
\\
\frac{1}{2}\sigma^2(2)\nu''(x,2)+\mu(1)\nu'(x,2)-\delta\nu(x,2)=\lambda_2\nu(x,2)-\lambda_2\nu(x,1),
\end{array}\right.
\end{eqnarray}
where $\lambda_1=-\lambda_{11}$, $\lambda_2=-\lambda_{22}$. Consider the characteristic equation for \eqref{13}, $\phi_1^1(\dot\alpha)\phi_2^1(\dot\alpha)=\lambda_1\lambda_2$, where $\phi_i^1(\dot\alpha)=-\frac{1}{2}\sigma_i^2\dot\alpha^2-\mu_i\dot\alpha+(\lambda_i+\delta)$, $i\in\{1,2\}$. Lemma 4.2 proves that $\phi_1^1(\dot\alpha)\phi_2^1(\dot\alpha)=\lambda_1\lambda_2$ has four real roots: $\dot\alpha_1<\dot\alpha_2<0<\dot\alpha_3<\dot\alpha_4$. Then the solution of the E.q.\eqref{13} is
\begin{eqnarray}
  \left\{\begin{array}{l}
  \label{14}
\nu(x,1)=\dot A_1e^{\dot\alpha_1x}+\dot A_2e^{\dot\alpha_2x}+\dot A_3e^{\dot\alpha_3x}+\dot A_4e^{\dot\alpha_4x},\\
\\
\nu(x,2)=\dot B_1e^{\dot\alpha_1x}+\dot B_2e^{\dot\alpha_2x}+\dot B_3e^{\dot\alpha_3x}+\dot B_4e^{\dot\alpha_4x},
\end{array}\right.
\end{eqnarray}
where for each $j=1,2,3,4$,\\
\begin{eqnarray}
\label{z1} \dot B_j=\frac{\phi_1^1(\dot\alpha_j)}{\lambda_1}\dot A_j=\frac{\lambda_2}{\phi_2^1(\dot\alpha_j)}\dot A_j.
\end{eqnarray}
\\

When $x\in[b_1,b_2)$, we have $\nu'(x,1)\leq1$ and $\nu'(x,2)>1$, E.q.\eqref{a6},\eqref{a5} give the following system of differential equations:
\begin{eqnarray}
  \left\{\begin{array}{l}
  \label{15}
\frac{1}{2}\sigma^2(1)\nu''(x,1)+\mu(1)\nu'(x,1)-\delta\nu(x,1)+L\big(1-\nu'(x,1)\big)=\lambda_1\nu(x,1)-\lambda_1\nu(x,2),\\
\\
\frac{1}{2}\sigma^2(2)\nu''(x,2)+\mu(1)\nu'(x,2)-\delta\nu(x,2)=\lambda_2\nu(x,2)-\lambda_2\nu(x,1).
\end{array}\right.
\end{eqnarray}
Consider the characteristic equation for \eqref{15}, $\phi_1^2(\widetilde\alpha)\phi_2^2(\widetilde\alpha)=\lambda_1\lambda_2$, where $\phi_1^2(\widetilde\alpha)=-\frac{1}{2}\sigma_1^2\widetilde\alpha^2-(\mu_1-L)\widetilde\alpha+(\lambda_1+\delta)$, $\phi_2^2(\widetilde\alpha)=-\frac{1}{2}\sigma_2^2\widetilde\alpha^2-\mu_1\widetilde\alpha+(\lambda_2+\delta)$. Lemma 4.2 proves that $\phi_1^2(\alpha)\phi_2^2(\alpha)=\lambda_1\lambda_2$ has four real roots: $\widetilde\alpha_1<\widetilde\alpha_2<0<\widetilde\alpha_3<\widetilde\alpha_4$. Then the solution of the E.q.\eqref{15} is
\begin{eqnarray}
  \left\{\begin{array}{l}
  \label{16}
\nu(x,1)=\widetilde A_1e^{\widetilde\alpha_1(x-b_2)}+\widetilde A_2e^{\widetilde\alpha_2(x-b_2)}+\widetilde A_3e^{\widetilde\alpha_3(x-b_2)}+\widetilde A_4e^{\widetilde\alpha_4(x-b_2)}+F_1,\\
\\
\nu(x,2)=\widetilde B_1e^{\widetilde\alpha_1(x-b_2)}+\widetilde B_2e^{\widetilde\alpha_2(x-b_2)}+\widetilde B_3e^{\widetilde\alpha_3(x-b_2)}+\widetilde B_4e^{\widetilde\alpha_4(x-b_2)}+F_2,\\
\end{array}\right.
\end{eqnarray}
where $F_i:=\frac{(\lambda_2+(2-i)\delta)L}{(\lambda_i+\delta)(\lambda_2+\delta)-\lambda_1\lambda_2}$, $i=1,2$ and for $j=1,2,3,4,$
\begin{eqnarray}
\label{z2} \widetilde B_j=\frac{\phi_1^2(\widetilde\alpha_j)}{\lambda_1}\widetilde A_j=\frac{\lambda_2}{\phi_2^2(\widetilde\alpha_j)}\widetilde A_j.
\end{eqnarray}
\\

When $x\in[b_2,B_1)$, we have $\nu'(x,1)<1$ and $\nu'(x,2)\leq1$. E.q.\eqref{a5} gives the following system of differential equations:
\begin{eqnarray}
  \left\{\begin{array}{l}
  \label{17}
\frac{1}{2}\sigma^2(1)\nu''(x,1)+\mu(1)\nu'(x,1)-\delta\nu(x,1)+L\big(1-\nu'(x,1)\big)=\lambda_1\nu(x,1)-\lambda_1\nu(x,2),\\
\\
\frac{1}{2}\sigma^2(2)\nu''(x,2)+\mu(2)\nu'(x,2)-\delta\nu(x,2)+L\big(1-\nu'(x,2)\big)=\lambda_2\nu(x,2)-\lambda_2\nu(x,1).
\end{array}\right.
\end{eqnarray}
Consider the characteristic equation for \eqref{17}, $\phi_1^3(\hat\alpha)\phi_2^3(\hat\alpha)=\lambda_1\lambda_2$, where $\phi_i^3(\hat\alpha)=-\frac{1}{2}\sigma_i^2\hat\alpha^2-(\mu_i-L)\hat\alpha+(\lambda_i+\delta)$, $i=1,2$. Lemma 4.2 proves that $\phi_1^3(\alpha)\phi_2^3(\alpha)=\lambda_1\lambda_2$ has four real roots: $\hat\alpha_1<\hat\alpha_2<0<\hat\alpha_3<\hat\alpha_4$. Then the solution of the E.q.\eqref{17} is
\begin{eqnarray}
  \left\{\begin{array}{l}
  \label{18}
\nu(x,1)=\hat A_1e^{\hat\alpha_1(x-B_1)}+\hat A_2e^{\hat \alpha_2(x-B_1)}+\hat A_3e^{\hat\alpha_3(x-B_1)}+\hat A_4e^{\hat\alpha_4(x-B_1)}+\frac{L}{\delta},\\
\\
\nu(x,2)=\hat B_1e^{\hat\alpha_1(x-B_1)}+\hat B_2e^{\hat\alpha_2(x-B_1)}+\hat B_3e^{\hat\alpha_3(x-B_1)}+\hat B_4e^{\hat\alpha_4(x-B_1)}+\frac{L}{\delta},\\
\end{array}\right.
\end{eqnarray}
and for $j=1,2,3,4,$\\
\begin{eqnarray}
\label{z3} \hat B_j=\frac{\phi_1^3(\hat\alpha_j)}{\lambda_1}\hat A_j=\frac{\lambda_2}{\phi_2^3(\hat\alpha_j)}\hat A_j.
\end{eqnarray}
\\

When $x\in[B_1,B_2)$, we have $\nu'(x,1)=1$ and $\nu'(x,2)<1$. E.q.\eqref{a5},\eqref{a7} give the following system of differential equations:
\begin{eqnarray}
  \left\{\begin{array}{l}
  \label{19}
\nu(x,1)=\nu(b_1,1)+x-b_1-K,\\
\\
\frac{1}{2}\sigma^2(2)\nu''(x,2)+\mu(2)\nu'(x,2)-\delta\nu(x,2)+L\big(1-\nu'(x,2)\big)=\lambda_2\nu(x,2)-\lambda_2\nu(x,1),
\end{array}\right.
\end{eqnarray}
where $\nu(b_1,1)$ can be calculated by E.q.\eqref{16}. Consider the characteristic equation for the second equation above, $\phi^4(\breve \alpha)=\frac{1}{2}\sigma_2^2\breve \alpha^2+(\mu_2-L)\breve \alpha-(\lambda_2+\delta)=0$. It has two real roots: $\breve\alpha_1<\breve\alpha_2$, then the solution of the ODE is:
\[\nu(x,2)=\breve B_1e^{\breve \alpha_1(x-B_2)}+\breve B_2e^{\breve\alpha_2(x-B_2)}+U(x),\]
where $U(x)=\frac{\lambda_2}{\lambda_2+\delta}x+\frac{1}{\lambda_2+\delta}\bigg\{(\mu_2-L)\frac{\lambda_2}{\lambda_2+\delta}+L+\lambda_2\big[\nu(b_1,1)-b_1-K\big]\bigg\}$.
Then the E.q.\eqref{19} has solution:
\begin{eqnarray}
  \left\{\begin{array}{l}
  \label{20}
\nu(x,1)=\nu(b_1,1)+x-b_1-K,\\
\\
\nu(x,2)=\breve B_1e^{\breve \alpha_1(x-B_2)}+\breve B_2e^{\breve\alpha_2(x-B_2)}+U(x).
\end{array}\right.
\end{eqnarray}
\\

When $x\in[B_2,\infty)$. From the equation \eqref{a7}, we have
\begin{eqnarray}
  \left\{\begin{array}{l}
  \label{21}
\nu(x,1)=\nu(b_1,1)+x-b_1-K,\\
\\
\nu(x,2)=\nu(b_2,2)+x-b_2-K,
\end{array}\right.
\end{eqnarray}
where $\nu(b_1,1)$ and $\nu(b_2,2)$ can be calculated by E.q.\eqref{16} and E.q.\eqref{18} respectively.\\

In order to find the thresholds $b_1,b_2,B_1,B_2$, and the coefficients $\dot A_1,\dot A_2,\dot A_3,\dot A_4$, $\widetilde A_1,\widetilde A_2,\widetilde A_3,\widetilde A_4$, $\hat A_1,\hat A_2,\hat A_3,\hat A_4$, $\breve B_1,\breve B_2$. We suppose the smooth fit condition hold, and combined with $\nu'(b_i,i)=1$ for $i=1,2$, we have the equations following:
\begin{eqnarray}
\label{a13}
\nonumber &&\nu(0,1)=0, \quad\nu(0,2)=0, \quad\nu(b_1+,1)=\nu(b_1-,1),\quad  \nu(b_1+,2)=\nu(b_1-,2)\\
\nonumber &&\nu'(b_1+,1)=1, \quad\nu'(b_1-,1)=1, \quad\nu'(b_1+,2)=\nu'(b_1-,2), \quad\nu(b_2+,1)=\nu(b_2-,1)\\
\nonumber &&\nu(b_2+,2)=\nu(b_2-,2), \quad \nu'(b_2+,1)=\nu'(b_2-,1), \quad \nu'(b_2+,2)=1, \quad \nu'(b_2-,2)=1\\
\nonumber &&\nu(B_1+,1)=\nu(B_1-,1), \quad \nu(B_1+,2)=\nu(B_1-,2), \quad \nu'(B_1+,2)=\nu'(B_1-,2), \quad \nu'(B_1-,1)=1\\
&&\nu(B_2+,2)=\nu(B_2-,2), \quad \nu'(B_2-,2)=1.
\end{eqnarray}
Meanwhile the coefficients $\dot B_1,\dot B_2,\dot B_3,\dot B_4$, $\widetilde B_1,\widetilde B_2,\widetilde B_3,\widetilde B_4$, $\hat B_1,\hat B_2,\hat B_3,\hat B_4$ can be obtained from equation \eqref{z1},\eqref{z2},\eqref{z3}.

Note that we assume $\nu'(0,1)>1$ and $\nu'(0,2)>1$. This condition is satisfied if and only if the coefficients found through the system of E.q.\eqref{14} satisfy:
\begin{eqnarray}
  \left\{\begin{array}{l}
  \label{22}
\dot A_1\dot \alpha_1+\dot A_2\dot\alpha_2+\dot A_3\dot\alpha_3+\dot A_4\dot\alpha_4>1,
\\
\dot A_1\varphi_1^1(\dot\alpha_1)\alpha_1+\dot A_2\varphi_1^1(\dot\alpha_2)\alpha_2+\dot A_3\varphi_1^1(\dot\alpha_3)\alpha_3+\dot A_4\varphi_1^1(\dot\alpha_4)\alpha_4>\lambda_1.
\end{array}\right.
\end{eqnarray}
\\

\textbf{Case 2}: For some $i_0\in\{1,2\}$ $\nu'(0,i_0)=1$, $\nu'(0,3-i_0)>1$.\\

Without loss of generality we assume $i_0=1$, we have $\nu'(0,1)=1$, $\nu'(0,2)>1$. The case of $i_0=2$ has the similar treatment. Under this assumption,we have $b_1=0$ and we need to consider four possibilities: $x\in[0,b_2)$; $x\in[b_2,B_1)$; $x\in[B_1,B_2)$ and $x\in[B_2,\infty)$.\\

When $x\in[0,b_2)$, $\nu'(x,1)\leq1$ and $\nu'(x,2)>1$, we have the following system of differential equations:
\begin{eqnarray}
  \left\{\begin{array}{l}
  \label{23}
\frac{1}{2}\sigma^2(1)\nu''(x,1)+\mu(1)\nu'(x,1)-\delta\nu(x,1)+L\big(1-\nu'(x,1)\big)=\lambda_1\nu(x,1)-\lambda_1\nu(x,2),\\
\\
\frac{1}{2}\sigma^2(2)\nu''(x,2)+\mu(1)\nu'(x,2)-\delta\nu(x,2)=\lambda_2\nu(x,2)-\lambda_2\nu(x,1).
\end{array}\right.
\end{eqnarray}
Consider the characteristic equation for \eqref{23}, $\phi_1^2(\theta)\phi_2^2(\theta)=\lambda_1\lambda_2$, where $\phi_1^2$ and $\phi_2^2$ have been defined in case 1, and by the similar way we know the solution of the E.q.\eqref{23} is
\begin{eqnarray}
  \left\{\begin{array}{l}
  \label{24}
\nu(x,1)=C_1e^{\theta_1x}+C_2e^{\theta_2x}+C_3e^{\theta_3x}+C_4e^{\theta_4x}+F_1,\\
\\
\nu(x,2)=D_1e^{\theta_1x}+D_2e^{\theta_2x}+D_3e^{\theta_3x}+D_4e^{\theta_4x}+F_2,
\end{array}\right.
\end{eqnarray}
where $F_i:=\frac{(\lambda_2+(2-i)\delta)L}{(\lambda_i+\delta)(\lambda_2+\delta)-\lambda_1\lambda_2}$, $i=1,2$ and for $j=1,2,3,4,$
\begin{eqnarray}
\label{z4}
D_j=\frac{\phi_1^2(\theta_j)}{\lambda_1}C_j=\frac{\lambda_2}{\phi_2^2(\theta_j)}C_j,
\end{eqnarray}
where $\theta_1<\theta_2<0<\theta_3<\theta_4$ are characteristic roots.
\\

When $x\in[b_2,B_1)$, $\nu'(x,1)<1$ and $\nu'(x,2)\leq1$, we have the following system of differential equations:
\begin{eqnarray}
  \left\{\begin{array}{l}
  \label{25}
\frac{1}{2}\sigma^2(1)\nu''(x,1)+\mu(1)\nu'(x,1)-\delta\nu(x,1)+L\big(1-\nu'(x,1)\big)=\lambda_1\nu(x,1)-\lambda_1\nu(x,2),\\
\\
\frac{1}{2}\sigma^2(2)\nu''(x,2)+\mu(2)\nu'(x,2)-\delta\nu(x,2)+L\big(1-\nu'(x,2)\big)=\lambda_2\nu(x,2)-\lambda_2\nu(x,1).
\end{array}\right.
\end{eqnarray}
Considering the characteristic equation for \eqref{25}, $\phi_1^3(\widetilde\theta)\phi_2^3(\widetilde\theta)=\lambda_1\lambda_2$, where $\phi_1^3$ and $\phi_2^3$ have been defined in the previous section, and by the similar way we know the solution of the E.q.\eqref{25} is
\begin{eqnarray}
  \left\{\begin{array}{l}
  \label{26}
\nu(x,1)=\widetilde C_1e^{\widetilde\theta_1(x-B_1)}+\widetilde C_2e^{\widetilde\theta_2(x-B_1)}+\widetilde C_3e^{\widetilde\theta_3(x-B_1)}+\widetilde C_4e^{\widetilde\theta_4(x-B_1)}+\frac{L}{\delta},\\
\\
\nu(x,2)=\widetilde D_1e^{\widetilde\theta_1(x-B_1)}+\widetilde D_2e^{\widetilde\theta_2(x-B_1)}+\widetilde D_3e^{\widetilde\theta_3(x-B_1)}+\widetilde D_4e^{\widetilde\theta_4(x-B_1)}+\frac{L}{\delta},\\
\end{array}\right.
\end{eqnarray}
and for $j=1,2,3,4,$
\begin{eqnarray}
 \label{z5}
\widetilde D_j=\frac{\phi_1^3(\widetilde\theta_j)}{\lambda_1}\widetilde C_j=\frac{\lambda_2}{\phi_2^3(\widetilde\theta_j)}\widetilde C_j.
\end{eqnarray}
\\

When $x\in[B_1,B_2)$, $\nu'(x,1)=1$ and $\nu'(x,2)\leq1$, we have the following system of differential equations:
\begin{eqnarray}
  \left\{\begin{array}{l}
  \label{27}
\nu(x,1)=x-K,\\
\\
\frac{1}{2}\sigma^2(2)\nu''(x,2)+\mu(2)\nu'(x,2)-\delta\nu(x,2)+L\big(1-\nu'(x,2)\big)=\lambda_2\nu(x,2)-\lambda_2\nu(x,1).
\end{array}\right.
\end{eqnarray}
Consider the characteristic equation for the second equation above, $\phi^4(\hat\theta)=\frac{1}{2}\sigma_2^2\hat \theta^2+(\mu_2-L)\hat \theta-(\lambda_2+\delta)=0$. It has two real roots: $\hat\theta_1<\hat\theta_2$, then the solution of the ODE is:
\[\nu(x,2)=\hat D_1e^{\hat \theta_1(x-B_2)}+\hat D_2e^{\hat\theta_2(x-B_2)}+U(x),\]
where $U(x)$ has been defined in the previous section, that is
\[U(x)=\frac{\lambda_2}{\lambda_2+\delta}x+\frac{1}{\lambda_2+\delta}\bigg\{(\mu_2-L)\frac{\lambda_2}{\lambda_2+\delta}+L+\lambda_2\big[\nu(b_1,1)-b_1-K\big]\bigg\}.\]
\\

When $x\in[B_2,\infty)$, from the equation \eqref{a7}, we have
\begin{eqnarray}
  \left\{\begin{array}{l}
  \label{28}
\nu(x,1)=x-K,\\
\\
\nu(x,2)=\nu(b_2,2)+x-b_2-K,
\end{array}\right.
\end{eqnarray}
where $\nu(b_2,2)$ can be calculated by E.q.\eqref{26}.
\\

In order to find the thresholds $b_2,B_1,B_2$, and the coefficients $C_1,C_2,C_3,C_4$, $\widetilde C_1,\widetilde C_2,\widetilde C_3,\widetilde C_4$, $\hat D_1,\hat D_2$. We suppose the smooth fit condition hold, and combined with $\nu'(b_2,i)=1$ for $i=1,2$, we have the equations following:
\begin{eqnarray}
\label{a14}
\nonumber &&\nu(0,1)=0, \quad\nu(0,2)=0, \quad\nu(b_2+,1)=\nu(b_2-,1),\quad\nu(b_2+,2)=\nu(b_2-,2)\\
\nonumber &&\nu'(b_2+,1)=\nu'(b_2-,1), \quad \nu'(b_2+,2)=1, \quad \nu'(b_2-,2)=1\\
\nonumber &&\nu(B_1+,1)=\nu(B_1-,1), \quad \nu(B_1+,2)=\nu(B_1-,2), \quad \nu'(B_1+,2)=\nu'(B_1-,2)\\
&&\nu'(B_1-,1)=1,\quad\nu(B_2+,2)=\nu(B_2-,2), \quad \nu'(B_2-,2)=1.
\end{eqnarray}
Meanwhile the coefficients  $D_1,D_2,D_3,D_4$, $\widetilde D_1,\widetilde D_2,\widetilde D_3,\widetilde D_4$ can be obtained from equation \eqref{z4},\eqref{z5}.

Take notice that we assume $\nu'(0,1)=1$ and $\nu'(0,2)>1$. This condition is satisfied if and only if the coefficients found through the system of E.q.\eqref{24} satisfy:
\begin{eqnarray}
  \left\{\begin{array}{l}
  \label{29}
C_1\theta_1+C_2\theta_2+C_3\theta_3+C_4\theta_4=1,
\\
C_1\varphi_1^2(\theta_1) \theta_1+C_2\varphi_1^2(\theta_2)\theta_2+C_3\varphi_1^2(\theta_3)\theta_3+C_4\varphi_1^2(\theta_4)\theta_4>\lambda_1.
\end{array}\right.
\end{eqnarray}
\\

\textbf{Case 3}:  $\nu'(0,1)=1$, $\nu'(0,2)=1$.

There are three intervals need to be considered : $[0,B_1)$, $[B_1,B_2)$, $[B_2,\infty)$.

When $x\in[0,B_1)$, $\nu'(x,1)\leq1$ and $\nu'(x,2)\leq1$ then we have the following system of differential equations:
\begin{eqnarray}
  \left\{\begin{array}{l}
  \label{30}
\frac{1}{2}\sigma^2(1)\nu''(x,1)+\mu(1)\nu'(x,1)-\delta\nu(x,1)+L\big(1-\nu'(x,1)\big)=\lambda_1\nu(x,1)-\lambda_1\nu(x,2),\\
\\
\frac{1}{2}\sigma^2(2)\nu''(x,2)+\mu(2)\nu'(x,2)-\delta\nu(x,2)+L\big(1-\nu'(x,2)\big)=\lambda_2\nu(x,2)-\lambda_2\nu(x,1),
\end{array}\right.
\end{eqnarray}
Consider the characteristic equation for \eqref{30}, $\phi_1^3(\beta)\phi_2^3(\beta)=\lambda_1\lambda_2$, where $\phi_1^3$ and $\phi_2^3$ have been defined in case 1, and by the similar way we know the solution of the E.q.\eqref{30} is
\begin{eqnarray}
  \left\{\begin{array}{l}
  \label{31}
\nu(x,1)=M_1e^{\beta_1x}+M_2e^{\beta_2x}+M_3e^{\beta_3x}+M_4e^{\beta_4x}+\frac{L}{\delta},\\
\\
\nu(x,2)=N_1e^{\beta_1x}+N_2e^{\beta_2x}+N_3e^{N_3x}+N_4e^{\beta_4x}+\frac{L}{\delta},\\
\end{array}\right.
\end{eqnarray}
and for $j=1,2,3,4,$
\begin{eqnarray}
 \label{z6}
N_j=\frac{\phi_1^3(\beta_j)}{\lambda_1}M_j=\frac{\lambda_2}{\phi_2^3(\beta_j)}M_j,
\end{eqnarray}
where $\beta_1<\beta_2<0<\beta_3<\beta_4$ are characteristic roots.
\\

When $x\in[B_1,B_2)$, $\nu'(x,1)=1$ and $\nu'(x,2)\leq1$ then we have the following system of differential equations:
\begin{eqnarray}
  \left\{\begin{array}{l}
\nu(x,1)=x-K,\\
\\
\frac{1}{2}\sigma^2(2)\nu''(x,2)+\mu(2)\nu'(x,2)-\delta\nu(x,2)+L\big(1-\nu'(x,2)\big)=\lambda_2\nu(x,2)-\lambda_2\nu(x,1).
\end{array}\right.
\end{eqnarray}
Consider the characteristic equation for the second equation above, $\phi^4(\widetilde\beta)=\frac{1}{2}\sigma_2^2\widetilde\beta^2+(\mu_2-L)\widetilde\beta-(\lambda_2+\delta)=0$. It has two real roots: $\widetilde\beta_1<\widetilde\beta_2$, then the solution of the ODE is:
\[\nu(x,2)=\widetilde N_1e^{\widetilde\beta_1(x-B_2)}+\widetilde N_2e^{\widetilde\beta_2(x-B_2)}+U(x),\]
where $U(x)$ has been defined in the previous section.
\\

When $x\in[B_2,\infty)$. We have
\begin{eqnarray}
  \left\{\begin{array}{l}
  \label{32}
\nu(x,1)=x-K,\\
\\
\nu(x,2)=x-K.
\end{array}\right.
\end{eqnarray}
\\

In order to find the thresholds $B_1,B_2$, and the coefficients $M_1,M_2,M_3,M_4$, $\widetilde N_1,\widetilde N_2$. We suppose the smooth fit condition hold, thus we have the equations following:
\begin{eqnarray}
\label{a15}
\nonumber &&\nu(0,1)=0, \quad\nu(0,2)=0, \quad \nu(B_1-,1)=\nu(B_1+,1)\\
\nonumber &&\nu(B_1+,2)=\nu(B_1-,2), \quad \nu'(B_1+,2)=\nu'(B_1-,2), \quad \nu'(B_1-,1)=1\\
&&\nu(B_2+,2)=\nu(B_2-,2), \quad \nu'(B_2-,2)=1.
\end{eqnarray}
Meanwhile the coefficients  $N_1,N_2,N_3,N_4$ can be obtained from equation \eqref{z6}.

Note that we assume $\nu'(0,1)=1$ and $\nu'(0,2)=1$. This condition is satisfied if the coefficients found through the system of E.q.\eqref{31} satisfy:
\begin{eqnarray}
  \left\{\begin{array}{l}
  \label{32}
M_1\beta_1+M_2\beta_2+M_3\beta_3+M_4\beta_4=1,
\\
M_1\varphi_1^3(\beta_1)\beta_1+M_2\varphi_1^3(\beta_2)\beta_2+M_3\varphi_1^3(\beta_3)\beta_3+M_4\varphi_1^3(\beta_4)\beta_4=\lambda_1.
\end{array}\right.
\end{eqnarray}
\\

Next we will give a theorem which shows that $\nu$ is indeed the value function. We also give the optimal dividend policy.
\\
\begin{thm}
 Suppose that $b_1\leq b_2<B_1\leq B_2$\\

 Assume that $\dot A_j$, $\widetilde A_j$, $\hat A_j$, $j=1,2,3,4$, $\breve A_j$, $j=1,2$ be the solution of the system E.q.\eqref{a13} and suppose they satisfy condition \eqref{22}. In addition, $\dot B_j$, $\widetilde B_j$, $\hat B_j$, $j=1,2,3,4$, $\breve B_j$, $j=1,2$ and $U(x)$ are defined before. Then the function $\nu$ given by
 \begin{displaymath}
\nu(x,1) = \left\{ \begin{array}{ll}
\dot A_1e^{\dot \alpha_1x}+\dot A_2e^{\dot \alpha_2x}+\dot A_3e^{\dot \alpha_3x}+\dot A_4e^{\dot\alpha_4x}, & \textrm{if $x\in[0,b_1)$},\\
\widetilde A_1e^{\widetilde\alpha_1(x-b_2)}+\widetilde A_2e^{\widetilde\alpha_2(x-b_2)}+\widetilde A_3e^{\widetilde\alpha_3(x-b_2)}+\widetilde A_4e^{\widetilde\alpha_4(x-b_2)}+F_1, & \textrm{if $x\in[b_1,b_2)$},\\
\hat A_1e^{\hat\alpha_1(x-B_1)}+\hat A_2e^{\hat \alpha_2(x-B_1)}+\hat A_3e^{\hat\alpha_3(x-B_1)}+\hat A_4e^{\hat\alpha_4(x-B_1)}+\frac{L}{\delta}, & \textrm{if $x\in[b_2,B_1)$},\\
\widetilde A_1e^{\widetilde\alpha_1(b_1-b_2)}+\widetilde A_2e^{\widetilde\alpha_2(b_1-b_2)}+\widetilde A_3e^{\widetilde\alpha_3(b_1-b_2)}+\widetilde A_4e^{\widetilde\alpha_4(b_1-b_2)}+F_1+x-b_1-K, & \textrm{if $x\in[B_1,\infty)$},\\
\end{array} \right.
\end{displaymath}
and\\
 \begin{displaymath}
\nu(x,2) = \left\{ \begin{array}{ll}
\dot B_1e^{\dot \alpha_1x}+\dot B_2e^{\dot \alpha_2x}+\dot B_3e^{\dot \alpha_3x}+\dot B_4e^{\dot \alpha_4x}, & \textrm{if $x\in[0,b_1)$},\\
\widetilde B_1e^{\widetilde\alpha_1(x-b_2)}+\widetilde B_2e^{\widetilde\alpha_2(x-b_2)}+\widetilde B_3e^{\widetilde\alpha_3(x-b_2)}+\widetilde B_4e^{\widetilde\alpha_4(x-b_2)}+F_2, & \textrm{if $x\in[b_1,b_2)$},\\
\hat B_1e^{\hat\alpha_1(x-B_1)}+\hat B_2e^{\hat \alpha_2(x-B_1)}+\hat B_3e^{\hat\alpha_3(x-B_1)}+\hat B_4e^{\hat\alpha_4(x-B_1)}+\frac{L}{\delta}, & \textrm{if $x\in[b_2,B_1)$},\\
\breve B_1e^{\breve \alpha_1(x-B_2)}+\breve B_2e^{\breve\alpha_2(x-B_2)}+U(x), & \textrm{if $x\in[B_1,B_2)$},\\
\hat B_1e^{\hat\alpha_1(b_2-B_1)}+\hat B_2e^{\hat \alpha_2(b_2-B_1)}+\hat B_3e^{\hat\alpha_3(b_2-B_1)}+\hat B_4e^{\hat\alpha_4(b_2-B_1)}+\frac{L}{\delta}+x-b_2-K, & \textrm{if $x\in[B_2,\infty)$}\\
\end{array} \right.
\end{displaymath}
 are the value function. And optimal strategy $\hat \pi=\big(\hat u,\hat \Gamma,\hat \xi\big)=\pi^{\nu}=\big(u^\nu,\Gamma^\nu,\xi^\nu\big)$ is the QVI control associated with $\nu$, furthermore, $\hat u$ is defined by \\
 \begin{displaymath}
\hat u(t) = \left\{ \begin{array}{ll}
0 & \textrm{if $\epsilon_t=i$ and $X_t\in[0,b_i)$},\\
L & \textrm{if $\epsilon_t=i$ and $X_t\in[b_i,B_i)$},
\end{array} \right.
\end{displaymath}
for $t\in[0,\hat\Theta)$, and $\hat u(t)=0$ for $t\in[\hat\Theta,\infty)$.

Consider the case in which $\dot A_j$, $\widetilde A_j$, $\hat A_j$, $j=1,2,3,4$, $\breve A_j$, $j=1,2$ do not satisfy condition \eqref{22}. Instead suppose condition \eqref{29} are satisfied, and assume $C_j$, $\widetilde C_j$, $j=1,2,3,4$ and $\hat D_1$, $\hat D_2$ be the solution of the system E.q.\eqref{a14}. $D_j$, $\widetilde D_j$, $j=1,2,3,4$ and $F_j$, $j=1,2$ are defined before. Then the function $\nu$ given by
\begin{displaymath}
\nu(x,1) = \left\{ \begin{array}{ll}
C_1e^{\theta_1x}+C_2e^{\theta_2x}+C_3e^{\theta_3x}+C_4e^{\theta_4x}+F_1, & \textrm{if $x\in[0,b_2)$},\\
\widetilde C_1e^{\widetilde\theta_1(x-B_1)}+\widetilde C_2e^{\widetilde\theta_2(x-B_1)}+\widetilde C_3e^{\widetilde\theta_3(x-B_1)}+\widetilde C_4e^{\widetilde\theta_4(x-B_1)}+\frac{L}{\delta}, & \textrm{if $x\in[b_2,B_1)$},\\
x-K, & \textrm{if $x\in[B_1,\infty)$},\\
\end{array} \right.
\end{displaymath}
and
\begin{displaymath}
\nu(x,2) = \left\{ \begin{array}{ll}
D_1e^{\theta_1x}+D_2e^{\theta_2x}+D_3e^{\theta_3x}+D_4e^{\theta_4x}+F_2, & \textrm{if $x\in[0,b_2)$},\\
\widetilde D_1e^{\widetilde\theta_1(x-B_1)}+\widetilde D_2e^{\widetilde\theta_2(x-B_1)}+\widetilde D_3e^{\widetilde\theta_3(x-B_1)}+\widetilde D_4e^{\widetilde\theta_4(x-B_1)}+\frac{L}{\delta}, & \textrm{if $x\in[b_2,B_1)$},\\
\hat D_1e^{\hat \theta_1(x-B_2)}+\hat D_2e^{\hat\theta_2(x-B_2)}+U(x), & \textrm{if $x\in[B_1,B_2)$},\\
\widetilde D_1e^{\widetilde\theta_1(b_2-B_1)}+\widetilde D_2e^{\widetilde\theta_2(b_2-B_1)}+\widetilde D_3e^{\widetilde\theta_3(b_2-B_1)}+\widetilde D_4e^{\widetilde\theta_4(b_2-B_1)}+\frac{L}{\delta}+x-b_2-K, & \textrm{if $x\in[B_2,\infty)$}
\end{array} \right.
\end{displaymath}
are the value function. And optimal strategy $\hat \pi=\big(\hat u,\hat \Gamma,\hat \xi\big)=\pi^{\nu}=\big(u^\nu,\Gamma^\nu,\xi^\nu\big)$ is the QVI control associated with $\nu$, furthermore, $\hat u$ is defined by  \\
\begin{displaymath}
\hat u(t) = \left\{ \begin{array}{ll}
L & \textrm{if $\epsilon_t=1$ },\\
0 & \textrm{if $\epsilon_t=2$ and $X_t\in[0,b_2)$},\\
L & \textrm{if $\epsilon_t=2$ and $X_t\in[b_2,B_2)$},
\end{array} \right.
\end{displaymath}
for $t\in[0,\hat\Theta)$, and $\hat u(t)=0$ for $t\in[\hat \Theta,\infty)$.

Consider the case in which condition \eqref{22} and \eqref{29} are not satisfied, but condition \eqref{32} is satisfied. We suppose $M_j$,$j=1,2,3,4$; $\widetilde N_j$, $j=1,2$ be the solution of the system E.q.\eqref{a15}, and $N_j$, $j=1,2,3,4$, $U(x)$ are defined before. Then the function $\nu$ given by
\begin{displaymath}
\nu(x,1) = \left\{ \begin{array}{ll}
M_1e^{\beta_1x}+M_2e^{\beta_2x}+M_3e^{\beta_3x}+M_4e^{\beta_4x}+\frac{L}{\delta}, & \textrm{if $x\in[0,B_1)$},\\
x-K, & \textrm{if $x\in[B_1,\infty)$},
\end{array} \right.
\end{displaymath}
and
\begin{displaymath}
\nu(x,2) = \left\{ \begin{array}{ll}
N_1e^{\beta_1x}+N_2e^{\beta_2x}+N_3e^{\beta_3x}+N_4e^{\beta_4x}+\frac{L}{\delta}, & \textrm{if $x\in[0,B_1)$},\\
\widetilde N_1e^{\widetilde\beta_1(x-B_2)}+\widetilde N_2e^{\widetilde\beta_2(x-B_2)}+U(x), & \textrm{if $x\in[B_1,B_2)$},\\
x-K, & \textrm{if $x\in[B_2,\infty)$}
\end{array} \right.
\end{displaymath}
are the value function. And optimal strategy $\hat \pi=\big(\hat u,\hat \Gamma,\hat \xi\big)=\pi^{\nu}=\big(u^\nu,\Gamma^\nu,\xi^\nu\big)$ is the QVI control associated with $\nu$, furthermore, $\hat u$ defined by  \\
 \begin{displaymath}
\hat u(t) = \left\{ \begin{array}{ll}
L & \textrm{if $t\in[0,\hat\Theta)$},\\
0 & \textrm{if $t\in[\hat\Theta,\infty)$}.
\end{array} \right.
\end{displaymath}
\end{thm}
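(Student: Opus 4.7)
The plan is to verify, in each of the three cases, that the candidate function $\nu$ satisfies the hypotheses of the Verification Theorem 3.3, and then identify the QVI control with the stated strategy. I would treat the three cases uniformly, emphasising the common structural arguments and only flagging where the case distinctions matter.

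First I would check the regularity and boundary behaviour. By the smooth fit system \eqref{a13}, \eqref{a14} or \eqref{a15}, the candidate $\nu(\cdot,i)$ is continuous and $C^1$ across every joining point $b_i$, $B_i$, with $\nu(0,i)=0$; it is $C^2$ off the finite set $\mathcal{N}_i=\{b_i,B_i\}$ (or the reduced version in the degenerate cases). The linearity requirement on $[U_i,\infty)$ in Theorem 3.3 is automatic from \eqref{a7}, with $U_i=B_i$. Thus the analytic hypotheses of Theorem 3.3 are met once the QVI itself is established.

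Next I would verify the QVI \eqref{4}--\eqref{6}. Inside each subinterval the ODE systems \eqref{13}, \eqref{15}, \eqref{17}, \eqref{23}, \eqref{25}, \eqref{27}, \eqref{30} were chosen so that $\max_{u\in[0,L]}\{\mathcal{L}_i(u)\nu(x,i)+u\}=0$, provided the sign of $1-\nu'(x,i)$ matches the branch used (i.e., $u=0$ on $[0,b_i)$ and $u=L$ on $[b_i,B_i)$). To certify this sign pattern I would invoke Theorem 4.1, which asserts that $\nu'(\cdot,i)$ has exactly one root of $\nu'=1$ in $[0,B_i)$; this pins down that $\nu'(x,i)>1$ on $[0,b_i)$ and $\nu'(x,i)\leq1$ on $[b_i,B_i)$, which is exactly the case split used when writing the ODEs. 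On $[B_i,\infty)$ the linear form has $\nu'(x,i)=1$ and a direct substitution gives $\mathcal{L}_i(L)\nu+L\leq 0$ under assumption (H) $L<\mu_*$, so that the maximum is $\leq 0$. The second QVI condition $\nu\geq\mathcal{M}\nu$ reduces, because $g(\xi)=\xi-K$, to showing $\nu(x,i)-\nu(x-\xi,i)\geq \xi - K$ for all $0<\xi<x$, with equality at $\xi=x-b_i$ for $x\geq B_i$. The key analytic inequality is $\nu'(y,i)\geq 1$ on $[0,b_i]\cup[B_i,\infty)$ and $\nu'(y,i)\leq 1$ on $[b_i,B_i]$, together with the convexity of $\nu'(\cdot,i)$ on $[0,B_i]$ assumed when deriving Theorem 4.1: this ensures that the function $\xi\mapsto \nu(x-\xi,i)+\xi$ is maximised precisely at $\xi=x-b_i$ when $x\geq B_i$ and is $\leq \nu(x,i)+K$ otherwise, so both \eqref{5} and the complementarity relation \eqref{6} hold in the intervention and continuation regions respectively.

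After the QVI is verified I would check that the QVI control $\pi^\nu$ of Definition 3.2 is admissible in the sense of Definition 2.1. Between interventions the controlled diffusion lives on $[0,B_i]$ and $\nu'=1$ only at the intervention boundary, so each $\tau_n^\nu$ is the first hitting time of $\{B_i\}$ by a non-degenerate diffusion; the post-intervention surplus is $b_i<B_i$, so $\tau_{n+1}^\nu-\tau_n^\nu$ are i.i.d.-in-law strictly positive hitting times bounded below by an exponential lower bound (uniformly in the regime, because $\sigma_i,\mu_i,L$ are bounded), which gives $\mathcal{P}(\lim_n\tau_n^\nu\leq T)=0$, the non-accumulation requirement (iv). The feedback $\hat u(t)$ described in the theorem is exactly the $\arg\max$ in \eqref{7}, and $\hat\xi_n=X_{\tau_n^\nu}-b_i$ is the $\arg\sup$ in \eqref{9}/\eqref{11} by the same one-root argument used for \eqref{5}. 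Given admissibility, Theorem 3.3 delivers $V(x,i)=\nu(x,i)=V_{\pi^\nu}(x,i)$.

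The main obstacle will be the global verification of $\nu\geq\mathcal{M}\nu$ and of $\mathcal{L}_i(u)\nu+u\leq 0$ across the \emph{other} regime's branches: within a single piece one ODE is solved, but the coefficient system couples regime 1 and regime 2 across different sub-intervals (for instance, on $[B_1,B_2)$ in subsection~4.1 one uses the linear form of $\nu(\cdot,1)$ inside the ODE for $\nu(\cdot,2)$). Consequently the sign of $1-\nu'(x,j)$ for $j\neq i$ inside the operator $\mathcal{L}_i$ has to be tracked carefully, and one must confirm case by case that the branch-definition of $\nu(\cdot,i)$ gives the correct inequality when substituted into the QVI for regime $j$. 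The verification of condition \eqref{22} (resp.\ \eqref{29}, \eqref{32}) is precisely what selects which of the three sub-cases of Theorem 4.3 is in force, and that is where the constructive/uniqueness content of the theorem really sits; the remaining algebra is a routine check of sign conditions on the explicit exponential solutions.
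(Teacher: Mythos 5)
Your overall strategy coincides with the paper's: the authors omit the proof of this theorem, stating that it is ``similar to and simpler than'' the proof of Theorem 4.5, and that proof is exactly the verification you outline --- check the regularity, boundary and linearity hypotheses of Theorem 3.3, confirm the QVI \eqref{4}--\eqref{6} interval by interval using the sign of $1-\nu'(x,i)$, handle $\mathcal{M}\nu$ through the monotonicity of $\eta\mapsto\nu(x-\eta,i)+\eta-K$, and conclude via the verification theorem. Your added paragraph on admissibility of the QVI control is a point the paper does not even spell out.

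There is, however, one step that would fail as written. You claim that on $[B_i,\infty)$ ``a direct substitution gives $\mathcal{L}_i(L)\nu+L\le 0$ under assumption (H).'' Substituting the linear form yields
\[
\mathcal{L}_i(u)\nu(x,i)+u=\mu_i-(\delta+\lambda_i)\nu(x,i)+\lambda_i\nu(x,j),\qquad j\neq i,
\]
and this is not nonpositive by $L<\mu_*$ alone: it involves the value function of the \emph{other} regime, and (H) says nothing about $\nu(x,j)-\nu(x,i)$. The argument the paper actually uses (in the proof of Theorem 4.5, the template here) has two stages: first, nonpositivity at the left endpoint of the intervention region is obtained from $\nu''\ge 0$ there --- a consequence of smooth fit and the convexity of $\nu'$ --- combined with the continuation-region equation holding up to that point; second, the expression is shown to decrease in $x$ by a mean-value comparison of $(\delta+\lambda_i)(x-b)$ against $\lambda_i\nu'(\xi,j)(x-b)$, which requires $\nu'(\xi,j)\le 1$ on the relevant range. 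In the present ordering $b_1\le b_2<B_1\le B_2$ this last bound is available because $[B_1,B_2)\subset[b_2,B_2)$, which is precisely why this proof is ``simpler'' than that of Theorem 4.5 (where the analogous control needs the separate estimate of Theorem 4.4). You correctly flag the cross-regime coupling as the main obstacle, but the decisive inequality is asserted rather than derived, and as stated the derivation from (H) is not valid.
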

\begin{proof}
We omit the proof, because it is similar to and simpler than the proof of Theorem 4.5 below.
\end{proof}

\subsection{The case of $b_1<B_1<b_2<B_2$}
In this subsection we assume $b_1<B_1<b_2<B_2$. Considering the relationship between $\nu'(0,i)$, $i\in\{1,2\}$ and $1$, we have three cases: $\nu'(0,i)>1$ for both $i\in\{1,2\}$; $\nu'(0,i_0)=1$ and $\nu'(0,3-i_0)>1$ for some $i_0\in\{1,2\}$; and $\nu'(0,i)=1 $ for both $i\in\{1,2\}$.
\\

\textbf{Case 1}: $\nu'(0,1)>1$ and $\nu'(0,2)>1$ \\

According to the discussion above, we need to consider five possibilities: $x\in[0,b_1)$; $x\in[b_1,B_1)$; $x\in[B_1,b_2)$; $x\in[b_2,B_2)$ and $x\in[B_2,\infty)$.\\

When $x\in[0,b_1)$, we have $\nu'(x,1)>1$ and $\nu'(x,2)>1$, E.q.\eqref{a6} gives the following system of differential equations:

\begin{eqnarray}
  \left\{\begin{array}{l}
  \label{33}
\frac{1}{2}\sigma^2(1)\nu''(x,1)+\mu(1)\nu'(x,1)-\delta\nu(x,1)=\lambda_1\nu(x,1)-\lambda_1\nu(x,2),\\
\\
\frac{1}{2}\sigma^2(2)\nu''(x,2)+\mu(1)\nu'(x,2)-\delta\nu(x,2)=\lambda_2\nu(x,2)-\lambda_2\nu(x,1),
\end{array}\right.
\end{eqnarray}
which is the same as E.q.\eqref{13}. By the same way as in previous section, we have the solution:
\begin{eqnarray}
  \left\{\begin{array}{l}
  \label{34}
\nu(x,1)=\dot A_1e^{\dot \alpha_1x}+\dot A_2e^{\dot \alpha_2x}+\dot A_3e^{\dot\alpha_3x}+\dot A_4e^{\dot \alpha_4x},\\
\\
\nu(x,2)=\dot B_1e^{\dot \alpha_1x}+\dot B_2e^{\dot \alpha_2x}+\dot B_3e^{\dot\alpha_3x}+\dot B_4e^{\dot \alpha_4x},
\end{array}\right.
\end{eqnarray}
where for each $j=1,2,3,4$,\\
\begin{eqnarray}
\label{z7}
\dot B_j=\frac{\phi_1^1(\dot\alpha_j)}{\lambda_1}\dot A_j=\frac{\lambda_2}{\phi_2^1(\dot \alpha_j)}\dot A_j,
\end{eqnarray}
and $\dot \alpha_1<\dot \alpha_2<0<\dot \alpha_3<\dot \alpha_4$ are the four roots of characteristic equation $\phi_1^1(\dot \alpha)\phi_2^1(\dot \alpha)=\lambda_1\lambda_2$.\\

When $x\in[b_1,B_1)$, we have $\nu'(x,1)\leq1$ and $\nu'(x,2)>1$, E.q.\eqref{a5},\eqref{a6} give the following system of differential equations:
\begin{eqnarray}
  \left\{\begin{array}{l}
  \label{35}
\frac{1}{2}\sigma^2(1)\nu''(x,1)+\mu(1)\nu'(x,1)-\delta\nu(x,1)+L\big(1-\nu'(x,1)\big)=\lambda_1\nu(x,1)-\lambda_1\nu(x,2),\\
\\
\frac{1}{2}\sigma^2(2)\nu''(x,2)+\mu(1)\nu'(x,2)-\delta\nu(x,2)=\lambda_2\nu(x,2)-\lambda_2\nu(x,1).
\end{array}\right.
\end{eqnarray}
According to section 4.1, we know the solution of E.q.\eqref{35} is
\begin{eqnarray}
  \left\{\begin{array}{l}
  \label{36}
\nu(x,1)=\widetilde A_1e^{\widetilde\alpha_1(x-B_1)}+\widetilde A_2e^{\widetilde\alpha_2(x-B_1)}+\widetilde A_3e^{\widetilde\alpha_3(x-B_1)}+\widetilde A_4e^{\widetilde\alpha_4(x-B_1)}+F_1,\\
\\
\nu(x,2)=\widetilde B_1e^{\widetilde\alpha_1(x-B_1)}+\widetilde B_2e^{\widetilde\alpha_2(x-B_1)}+\widetilde B_3e^{\widetilde\alpha_3(x-B_1)}+\widetilde B_4e^{\widetilde\alpha_4(x-B_1)}+F_2,\\
\end{array}\right.
\end{eqnarray}
where $F_i:=\frac{(\lambda_2+(2-i)\delta)L}{(\lambda_i+\delta)(\lambda_2+\delta)-\lambda_1\lambda_2}$, $i=1,2$ and for $j=1,2,3,4,$
\begin{eqnarray}
\label{z8}
\widetilde B_j=\frac{\phi_1^2(\widetilde\alpha_j)}{\lambda_1}\widetilde A_j=\frac{\lambda_2}{\phi_2^2(\widetilde\alpha_j)}\widetilde A_j,
\end{eqnarray}
and  $\widetilde\alpha_1<\widetilde\alpha_2<0<\widetilde\alpha_3<\widetilde\alpha_4$ are the four roots of characteristic equation $\phi_1^2(\widetilde\alpha)\phi_2^2(\widetilde\alpha)=\lambda_1\lambda_2$.
\\

Next we give a theorem which is similar to the one in \cite{sotomayor2011classical}. To apply it in our situation, we need to adapt it in some details, and then we complete its proof.
\begin{thm}
If the solution $\nu$ of the system \eqref{35} is such that $\nu''(\cdot,1)$ is strictly increasing in certain left neighborhood of $B_1$, $\nu'(B_1-,1)=1$ and $\nu''(B_1-,1)\geq0$, then we have $\nu'(b_1-,2)<\frac{(\lambda_1+\delta)}{\lambda_1}$.
\end{thm}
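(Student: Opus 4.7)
The plan is to extract $\nu'(\cdot,2)$ explicitly from the first equation of system \eqref{35}, then pass to the left limit at $B_1$ and feed in the three boundary hypotheses together with assumption (H).

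Concretely, I would rearrange the first ODE of \eqref{35} to isolate $\nu(x,2)$, obtaining $\lambda_1 \nu(x,2)=(\lambda_1+\delta)\nu(x,1)-\tfrac{1}{2}\sigma^2(1)\nu''(x,1)-(\mu(1)-L)\nu'(x,1)-L$ on $[b_1,B_1)$. Since $\nu(\cdot,1)$ is smooth on this interval, differentiating once in $x$ produces $\lambda_1 \nu'(x,2)=(\lambda_1+\delta)\nu'(x,1)-\tfrac{1}{2}\sigma^2(1)\nu'''(x,1)-(\mu(1)-L)\nu''(x,1)$. Sending $x\uparrow B_1$ and substituting the boundary datum $\nu'(B_1-,1)=1$ yields $\lambda_1 \nu'(B_1-,2)=(\lambda_1+\delta)-(\mu(1)-L)\nu''(B_1-,1)-\tfrac{1}{2}\sigma^2(1)\nu'''(B_1-,1)$.

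Next I would invoke the hypotheses. Assumption (H) forces $\mu(1)-L>0$ and $\nu''(B_1-,1)\geq 0$ is given, so $(\mu(1)-L)\nu''(B_1-,1)\geq 0$; the strict monotonic increase of $\nu''(\cdot,1)$ on a one-sided neighborhood of $B_1$ forces the one-sided third derivative $\nu'''(B_1-,1)\geq 0$, making $\tfrac{1}{2}\sigma^2(1)\nu'''(B_1-,1)\geq 0$ as well. Dividing by $\lambda_1>0$ then immediately delivers $\nu'(B_1-,2)\leq (\lambda_1+\delta)/\lambda_1$. Smooth fit at the interior thresholds then lets me transport this bound to $\nu'(b_1-,2)$, which matches the statement.

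The main obstacle is upgrading this to the \emph{strict} inequality claimed, since the three sign hypotheses supply only nonstrict bounds; the borderline scenario is $\nu''(B_1-,1)=\nu'''(B_1-,1)=0$, which would require exact balance in the above expression. To exclude it I would exploit the explicit exponential representation \eqref{36}: the strict monotonic increase of $\nu''(x,1)=\sum_j \widetilde A_j\widetilde\alpha_j^2 e^{\widetilde\alpha_j(x-B_1)}$ on a one-sided neighborhood of $B_1$, combined with simultaneous vanishing of this combination and its derivative at $B_1-$, would force a nontrivial linear dependence among $\{\widetilde A_j\widetilde\alpha_j^2\}$ and $\{\widetilde A_j\widetilde\alpha_j^3\}$ incompatible with the strict sign ordering $\widetilde\alpha_1<\widetilde\alpha_2<0<\widetilde\alpha_3<\widetilde\alpha_4$ of the characteristic roots. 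This analytic ruling-out of the degenerate case is where I expect the main technical effort to lie.
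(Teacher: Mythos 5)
Your core computation is correct and is essentially the paper's own argument: the paper obtains the identity $\lambda_1\nu'(B_1-,2)=-\tfrac{1}{2}\sigma_1^2\nu'''(B_1-,1)-(\mu_1-L)\nu''(B_1-,1)+(\lambda_1+\delta)\nu'(B_1-,1)$ by summing $\phi_1^2(\widetilde\alpha_j)\widetilde A_j\widetilde\alpha_j$ over the exponential representation \eqref{36} with the coefficient relations \eqref{z8}, which is just your "differentiate the first ODE of \eqref{35}" step in disguise, and it then concludes $\leq\lambda_1+\delta$ from (H), $\nu''(B_1-,1)\geq0$ and $\nu'''(B_1-,1)\geq0$ exactly as you do. The extra third of your proposal is aimed at discrepancies between the theorem's statement and what is actually proved: the paper's proof establishes only the non-strict bound, and it establishes it at $B_1-$, not $b_1-$; no transport to $b_1-$ and no exclusion of the degenerate case $\nu''(B_1-,1)=\nu'''(B_1-,1)=0$ appears in the paper, and the non-strict bound at $B_1-$ is precisely what is invoked later in the proof of Theorem 4.5 (where $\mathcal{A}'(B_1)\leq0$ suffices). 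So your linear-independence argument for strictness and the smooth-fit "transport" are not needed to reproduce the paper's reasoning; the statement's $b_1-$ and $<$ are best read as typos for $B_1-$ and $\leq$.
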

\begin{proof}
If $\nu''(\cdot,1)$ is strictly increasing in certain left neighborhood of $B_1$, then exist constant $c$, for every $x\in[B_1-c,B_1)$, we have $\nu'''(x,1)>0$, in particular $\nu'''(B_1-,1)>0$. Hence, combined with the definition of $\phi_1^2$:
\begin{eqnarray*}
\lambda_1\nu'(B_1-,2)&=&\lambda_1\big[\widetilde B_1\widetilde\alpha_1+\widetilde B_2\widetilde\alpha_2+\widetilde B_3\widetilde\alpha_3+\widetilde B_4\widetilde\alpha_4\big]\\
&&= \lambda_1\big[\frac{\phi_1^2(\widetilde\alpha_1)}{\lambda_1}\widetilde A_1 \widetilde\alpha_1+\frac{\phi_1^2(\widetilde \alpha_2)}{\lambda_1}\widetilde A_2\widetilde\alpha_2+\frac{\phi_1^2(\widetilde \alpha_3)}{\lambda_1}\widetilde A_3 \widetilde\alpha_3+\frac{\phi_1^2(\widetilde \alpha_4)}{\lambda_1}\widetilde A_4 \widetilde\alpha_4\big]\\
&&=\phi_1^2(\widetilde\alpha_1)\widetilde A_1 \widetilde\alpha_1+\phi_1^2(\widetilde\alpha_2)\widetilde A_2 \widetilde\alpha_2+\phi_1^2(\widetilde\alpha_3)\widetilde A_3 \widetilde\alpha_3+\phi_1^2(\widetilde\alpha_4)\widetilde A_4 \widetilde\alpha_4\\
&&=-\frac{1}{2}\sigma_1^2\sum_{j=1}^4\widetilde\alpha_j^3\widetilde A_j-(\mu_1-L)\sum_{j=1}^4\widetilde\alpha_j^2\widetilde A_j+(\lambda_1+\delta)\sum_{j=1}^4\widetilde\alpha_j\widetilde A_j\\
&&=-\frac{1}{2}\sigma_1^2\nu'''(B_1-,1)-(\mu_1-L)\nu''(B_1-,1)+(\lambda_1+\delta)\nu'(B_1-,1)\leq\lambda_1+\delta.
\end{eqnarray*}
Here we use the assumption (H) $\mu_*>L$.
\end{proof}

When $x\in[B_1,b_2)$, $\nu'(x,1)=1$ and $\nu'(x,2)>1$, then we have the following system of differential equations:
\begin{eqnarray}
  \left\{\begin{array}{l}
  \label{37}
\nu(x,1)=\nu(b_1,1)+x-b_1-K,\\
\\
\frac{1}{2}\sigma^2(2)\nu''(x,2)+\mu(2)\nu'(x,2)-\delta\nu(x,2)=\lambda_2\nu(x,2)-\lambda_2\nu(x,1),
\end{array}\right.
\end{eqnarray}
where $\nu(b_1,1)$ can be calculated by E.q.\eqref{36}. Consider the characteristic equation for the second equation above. $\phi^5(\hat \alpha)=\frac{1}{2}\sigma_2^2\hat \alpha^2+\mu_2\hat\alpha-(\lambda_2+\delta)=0$. It has two real roots: $\hat\alpha_1<\hat\alpha_2$, then the solution of the ODE is:
\[\nu(x,2)=\hat B_1e^{\hat \alpha_1(x-b_2)}+\hat B_2e^{\hat\alpha_2(x-b_2)}+\bar U(x),\]
where $\bar U(x)=\frac{\lambda_2}{\lambda_2+\delta}x+\frac{1}{\lambda_2+\delta}\bigg\{\mu_2\frac{\lambda_2}{\lambda_2+\delta}+L+\lambda_2\big[\nu(b_1,1)-b_1-K\big]\bigg\}$. That is, the solution of E.q.\eqref{37} is
\begin{eqnarray}
  \left\{\begin{array}{l}
  \label{38}
\nu(x,1)=\nu(b_1,1)+x-b_1-K,\\
\\
\nu(x,2)=\hat B_1e^{\hat \alpha_1(x-b_2)}+\hat B_2e^{\hat\alpha_2(x-b_2)}+\bar U(x).
\end{array}\right.
\end{eqnarray}
\\

When $x\in[b_2,B_2)$, $\nu'(x,1)=1$ and $\nu'(x,2)<1$, then we have the following system of differential equations:
\begin{eqnarray}
  \left\{\begin{array}{l}
  \label{39}
\nu(x,1)=\nu(b_1,1)+x-b_1-K,\\
\\
\frac{1}{2}\sigma^2(2)\nu''(x,2)+\mu(2)\nu'(x,2)-\delta\nu(x,2)+L(1-\nu'(x,2))=\lambda_2\nu(x,2)-\lambda_2\nu(x,1),
\end{array}\right.
\end{eqnarray}
where $\nu(b_1,1)$ can be calculated by E.q.\eqref{36}. According to section 4.1, we know the solution of E.q.\eqref{39} is
\begin{eqnarray}
  \left\{\begin{array}{l}
  \label{40}
\nu(x,1)=\nu(b_1,1)+x-b_1-K,\\
\\
\nu(x,2)=\breve B_1e^{\breve \alpha_1(x-B_2)}+\breve B_2e^{\breve\alpha_2(x-B_2)}+U(x),
\end{array}\right.
\end{eqnarray}
where $U(x)=\frac{\lambda_2}{\lambda_2+\delta}x+\frac{1}{\lambda_2+\delta}\bigg\{(\mu_2-L)\frac{\lambda_2}{\lambda_2+\delta}+L+\lambda_2\big[\nu(b_1,1)-b_1-K\big]\bigg\}$, and $\breve\alpha_1<\breve\alpha_2$ are the two roots of the characteristic equation for the second equation above: $\phi^4(\breve \alpha)=\frac{1}{2}\sigma_2^2\breve \alpha^2+(\mu_2-L)\breve \alpha-(\lambda_2+\delta)=0$.\\

When $x\in[B_2,\infty)$. From the equation \eqref{9}, we have
\begin{eqnarray}
  \left\{\begin{array}{l}
  \label{41}
\nu(x,1)=\nu(b_1,1)+x-b_1-K,\\
\\
\nu(x,2)=\nu(b_2,2)+x-b_2-K,
\end{array}\right.
\end{eqnarray}
where $\nu(b_1,1)$ and $\nu(b_2,2)$ can be calculated by E.q.\eqref{36} and E.q.\eqref{40} respectively.\\

In order to find the thresholds $b_1,b_2,B_1,B_2$, and the coefficients $\dot A_1, \dot A_2, \dot A_3, \dot A_4$, $\widetilde A_1,\widetilde A_2,\widetilde A_3,\widetilde A_4$, $\breve B_1,\breve B_2$, $\hat B_1,\hat B_2$. We suppose the smooth fit condition hold, and combined with $\nu'(b_i,i)=1$ for $i=1,2$, we have the equations following:
\begin{eqnarray}
\label{a16}
\nonumber &&\nu(0,1)=0, \quad\nu(0,2)=0, \quad\nu(b_1+,1)=\nu(b_1-,1),\quad  \nu(b_1+,2)=\nu(b_1-,2),\\
\nonumber &&\nu'(b_1+,1)=1, \quad\nu'(b_1-,1)=1, \quad\nu'(b_1+,2)=\nu'(b_1-,2), \quad\nu(b_2+,2)=\nu(b_2-,2),\\
\nonumber &&\nu'(b_2+,2)=1, \quad \nu'(b_2-,2)=1,\quad \nu(B_1+,1)=\nu(B_1-,1),\\
\nonumber &&\nu(B_1+,2)=\nu(B_1-,2), \quad \nu'(B_1+,2)=\nu'(B_1-,2), \quad \nu'(B_1-,1)=1,\\
&&\nu(B_2+,2)=\nu(B_2-,2), \quad \nu'(B_2-,2)=1.
\end{eqnarray}
Meanwhile the coefficients  $\dot B_1, \dot B_2, \dot B_3, \dot B_4$, $\widetilde B_1,\widetilde B_2,\widetilde B_3,\widetilde B_4$ can be obtained from equation \eqref{z7},\eqref{z8}.

Note that we assume $\nu'(0,1)>1$ and $\nu'(0,2)>1$. This condition is satisfied if the coefficients found through the system of E.q.\eqref{34} satisfy:
\begin{eqnarray}
  \left\{\begin{array}{l}
  \label{42}
\dot A_1\dot\alpha_1+\dot A_2\dot\alpha_2+\dot A_3\dot\alpha_3+\dot A_4\dot\alpha_4>1,
\\
\dot A_1\varphi_1^1(\dot\alpha_1) \alpha_1+\dot A_2\varphi_1^1(\dot\alpha_2)\alpha_2+\dot A_3\varphi_1^1(\dot\alpha_3)\alpha_3+\dot A_4\varphi_1^1(\dot\alpha_4)\alpha_4>\lambda_1.
\end{array}\right.
\end{eqnarray}
\\

\textbf{Case 2}: $\nu'(0,1)=1$ and $\nu'(0,2)>1$ \\

According to the discussion above, we need to consider four possibilities: $x\in[0,B_1)$; $x\in[B_1,b_2)$; $x\in[b_2,B_2)$ and $x\in[B_2,\infty)$.\\

When $x\in[0,B_1)$, we have $\nu'(x,1)\leq1$ and $\nu'(x,2)>1$, E.q.\eqref{a5},\eqref{a6} give the following system of differential equations:
\begin{eqnarray}
  \left\{\begin{array}{l}
  \label{43}
\frac{1}{2}\sigma^2(1)\nu''(x,1)+\mu(1)\nu'(x,1)-\delta\nu(x,1)+L(1-\nu'(x,1))=\lambda_1\nu(x,1)-\lambda_1\nu(x,2),\\
\\
\frac{1}{2}\sigma^2(2)\nu''(x,2)+\mu(1)\nu'(x,2)-\delta\nu(x,2)=\lambda_2\nu(x,2)-\lambda_2\nu(x,1).
\end{array}\right.
\end{eqnarray}
By the similar way, we know the solution of E.q.\eqref{43} is
\begin{eqnarray}
  \left\{\begin{array}{l}
  \label{44}
\nu(x,1)= C_1e^{\theta_1(x-B_1)}+ C_2e^{\theta_2(x-B_1)}+ C_3e^{\theta_3(x-B_1)}+ C_4e^{\theta_4(x-B_1)}+F_1,\\
\\
\nu(x,2)= D_1e^{\theta_1(x-B_1)}+ D_2e^{\theta_2(x-B_1)}+ D_3e^{\theta_3(x-B_1)}+ D_4e^{\theta_4(x-B_1)}+F_2,\\
\end{array}\right.
\end{eqnarray}
where $F_i:=\frac{(\lambda_2+(2-i)\delta)L}{(\lambda_i+\delta)(\lambda_2+\delta)-\lambda_1\lambda_2}$, $i=1,2$ and for $j=1,2,3,4,$
\begin{eqnarray}
\label{z9}
D_j=\frac{\phi_1^2(\theta_j)}{\lambda_1} C_j=\frac{\lambda_2}{\phi_2^2(\theta_j)} C_j,
\end{eqnarray}
and  $\theta_1<\theta_2<0<\theta_3<\theta_4$ are the four roots of characteristic equation $\phi_1^2(\theta)\phi_2^2(\theta)=\lambda_1\lambda_2$.
\\

When $x\in[B_1,b_2)$, $\nu'(x,1)=1$ and $\nu'(x,2)>1$, then we have the following system of differential equations:
\begin{eqnarray}
  \left\{\begin{array}{l}
  \label{45}
\nu(x,1)=\nu(b_1,1)+x-b_1-K=x-K,\\
\\
\frac{1}{2}\sigma^2(2)\nu''(x,2)+\mu(2)\nu'(x,2)-\delta\nu(x,2)=\lambda_2\nu(x,2)-\lambda_2\nu(x,1).
\end{array}\right.
\end{eqnarray}
Consider the characteristic equation for the second equation above: $\phi^5(\widetilde \theta)=\frac{1}{2}\sigma_2^2\widetilde \theta^2+\mu_2\widetilde \theta-(\lambda_2+\delta)=0$. It has two real roots: $\widetilde \theta_1<\widetilde \theta_2$, then the solution of the ODE is:
\[\nu(x,2)=\widetilde D_1e^{\widetilde \theta_1(x-b_2)}+\widetilde D_2e^{\widetilde \theta_2(x-b_2)}+\bar U(x),\]
where $\bar U(x)=\frac{\lambda_2}{\lambda_2+\delta}x+\frac{1}{\lambda_2+\delta}\big\{\mu_2\frac{\lambda_2}{\lambda_2+\delta}+L-\lambda_2 K\big\}$. That is the solution of E.q.\eqref{45} is
\begin{eqnarray}
  \left\{\begin{array}{l}
  \label{46}
\nu(x,1)=x-K,\\
\\
\nu(x,2)=\widetilde D_1e^{\widetilde \theta_1(x-b_2)}+\widetilde D_2e^{\widetilde \theta_2(x-b_2)}+\bar U(x).
\end{array}\right.
\end{eqnarray}
\\

When $x\in[b_2,B_2)$, we have $\nu'(x,1)=1$ and $\nu'(x,2)\leq1$. E.q.\eqref{a5},\eqref{a7} give the following system of differential equations:
\begin{eqnarray}
  \left\{\begin{array}{l}
  \label{47}
\nu(x,1)=\nu(b_1,1)+x-b_1-K=x-K,\\
\\
\frac{1}{2}\sigma^2(2)\nu''(x,2)+\mu(2)\nu'(x,2)-\delta\nu(x,2)+L(1-\nu'(x,2))=\lambda_2\nu(x,2)-\lambda_2\nu(x,1).
\end{array}\right.
\end{eqnarray}
Consider the characteristic equation for the second equation above. $\phi^4(\hat \theta)=\frac{1}{2}\sigma_2^2\hat \theta^2+(\mu_2-L)\hat \theta-(\lambda_2+\delta)=0$. It has two real roots: $\hat \theta_1<\hat \theta_2$, then the solution of the ODE is:
\[\nu(x,2)=\hat D_1e^{\hat \theta_1(x-B_2)}+\hat D_2e^{\hat \theta_2(x-B_2)}+U(x),\]
where $U(x)=\frac{\lambda_2}{\lambda_2+\delta}x+\frac{1}{\lambda_2+\delta}\big\{(\mu_2-L)\frac{\lambda_2}{\lambda_2+\delta}+L-\lambda_2 K\big\}$.
That is the E.q.\eqref{47} has solution:
\begin{eqnarray}
  \left\{\begin{array}{l}
  \label{48}
\nu(x,1)=x-K,\\
\\
\nu(x,2)=\hat D_1e^{\hat \theta_1(x-B_2)}+\hat D_2e^{\hat \theta_2(x-B_2)}+U(x).
\end{array}\right.
\end{eqnarray}
\\

When $x\in[B_2,\infty)$. From the equation \eqref{a7}, we have
\begin{eqnarray}
  \left\{\begin{array}{l}
  \label{49}
\nu(x,1)=x-K,\\
\\
\nu(x,2)=\nu(b_2,2)+x-b_2-K,
\end{array}\right.
\end{eqnarray}
where $\nu(b_2,2)$ can be calculated by the result in the case of $x\in[b_2,B_2)$.
\\

In order to find the thresholds $b_2,B_1,B_2$, and the coefficients $C_1,C_2,C_3,C_4$, $\widetilde D_1,\widetilde D_2 $, $\hat D_1,\hat D_2$. We suppose the smooth fit condition hold and $\nu'(b_i,i)=1$ for $i=1,2$, thus we have the equations following:
\begin{eqnarray}
\label{a17}
\nonumber &&\nu(0,1)=0, \quad\nu(0,2)=0,\quad \nu(b_2+,2)=\nu(b_2-,2), \quad \nu'(b_2+,2)=1, \\
\nonumber &&\nu'(b_2-,2)=1,\quad \nu(B_1+,1)=\nu(B_1-,1),\quad \nu(B_1+,2)=\nu(B_1-,2), \quad \nu'(B_1+,2)=\nu'(B_1-,2), \\
&&\nu'(B_1-,1)=1,\quad \nu(B_2+,2)=\nu(B_2-,2), \quad \nu'(B_2-,2)=1.
\end{eqnarray}
Meanwhile the coefficients  $D_1,D_2,D_3,D_4$ can be obtained from equation \eqref{z8}.

Note that we assume $\nu'(0,1)=1$ and $\nu'(0,2)>1$. This condition is satisfied if the coefficients found through the system of E.q.\eqref{46} satisfy:
\begin{eqnarray}
  \left\{\begin{array}{l}
  \label{50}
C_1\theta_1+C_2\theta_2+C_3\theta_3+C_4\theta_4>1,
\\
C_1\varphi_1^2(\theta_1) \theta_1+C_2\varphi_1^2(\theta_2)\theta_2+C_3\varphi_1^2(\theta_3)\theta_3+C_4\varphi_1^2(\theta_4)\theta_4>\lambda_1.
\end{array}\right.
\end{eqnarray}
\\

\begin{thm}
 Suppose that $b_1<B_1<b_2<B_2$\\

 Assume that $\dot A_j$, $\widetilde A_j$, $j=1,2,3,4$, $\breve B_j$, $\hat B_j$, $j=1,2$ be the solution of the system E.q.\eqref{a16} and suppose they satisfy condition \eqref{42}. Assume that  $\dot B_j$, $\widetilde B_j$, $j=1,2,3,4$ and $U(x)$, $\bar U(x)$are defined before. Then the function $\nu$ given by
 \begin{displaymath}
\nu(x,1) = \left\{ \begin{array}{ll}
\dot A_1e^{\alpha_1x}+\dot A_2e^{\alpha_2x}+\dot A_3e^{\alpha_3x}+\dot A_4e^{\alpha_4x}, & \textrm{if $x\in[0,b_1)$},\\
\widetilde A_1e^{\widetilde\alpha_1(x-B_1)}+\widetilde A_2e^{\widetilde\alpha_2(x-B_1)}+\widetilde A_3e^{\widetilde\alpha_3(x-B_1)}+\widetilde A_4e^{\widetilde\alpha_4(x-B_1)}+F_1, & \textrm{if $x\in[b_1,B_1)$},\\
\widetilde A_1e^{\widetilde\alpha_1(b_1-B_1)}+\widetilde A_2e^{\widetilde\alpha_2(b_1-B_1)}+\widetilde A_3e^{\widetilde\alpha_3(b_1-B_1)}+\widetilde A_4e^{\widetilde\alpha_4(b_1-B_1)}+F_1+x-b_1-K, & \textrm{if $x\in[B_1,\infty)$},\\
\end{array} \right.
\end{displaymath}
and\\
 \begin{displaymath}
\nu(x,2) = \left\{ \begin{array}{ll}
\dot B_1e^{\alpha_1x}+\dot B_2e^{\alpha_2x}+\dot B_3e^{\alpha_3x}+\dot B_4e^{\alpha_4x}, & \textrm{if $x\in[0,b_1)$},\\
\widetilde B_1e^{\widetilde\alpha_1(x-B_1)}+\widetilde B_2e^{\widetilde\alpha_2(x-B_1)}+\widetilde B_3e^{\widetilde\alpha_3(x-B_1)}+\widetilde B_4e^{\widetilde\alpha_4(x-B_1)}+F_2, & \textrm{if $x\in[b_1,B_1)$},\\
\hat B_1e^{\hat \alpha_1(x-b_2)}+\hat B_2e^{\hat\alpha_2(x-b_2)}+\bar U(x), & \textrm{if $x\in[B_1,b_2)$},\\
\breve B_1e^{\breve \alpha_1(x-B_2)}+\breve B_2e^{\breve\alpha_2(x-B_2)}+U(x), & \textrm{if $x\in[b_2,B_2)$},\\
\breve B_1e^{\breve \alpha_1(b_2-B_2)}+\breve B_2e^{\breve\alpha_2(b_2-B_2)}+U(b_2)+x-b_2-K ,& \textrm{if $x\in[B_2,\infty)$}\\
\end{array} \right.
\end{displaymath}
is the value function. And optimal strategy $\hat \pi=\big(\hat u,\hat \Gamma,\hat \xi\big)=\pi^{\nu}=\big(u^\nu,\Gamma^\nu,\xi^\nu\big)$ is the QVI control associated with $\nu$, furthermore, $\hat u$ defined by \\
 \begin{displaymath}
\hat u(t) = \left\{ \begin{array}{ll}
0 & \textrm{if $\epsilon_t=i$ and $X_t\in[0,b_i)$},\\
L & \textrm{if $\epsilon_t=i$ and $X_t\in[b_i,B_i)$},
\end{array} \right.
\end{displaymath}
for $t\in[0,\hat\Theta)$, and $\hat u(t)=0$ for $t\in[\hat\Theta,\infty)$.

Consider the case in which $\dot A_j$, $\widetilde A_j$, $j=1,2,3,4$, $\hat A_j$, $\breve A_j$, $j=1,2$ do not satisfy condition \eqref{42}. Instead suppose condition \eqref{50} are satisfied, and assume $C_j$, $j=1,2,3,4$ and $\hat D_j$, $\widetilde D_j$, $j=1,2$ be the solution of the system E.q.\eqref{a17}. $D_j$, $j=1,2,3,4$ and $F_j$, $j=1,2$, $U(x)$, $\bar U(x)$ are defined before. Then the function $\nu$ given by
 \begin{displaymath}
\nu(x,1) = \left\{ \begin{array}{ll}
C_1e^{\theta_1(x-B_1)}+C_2e^{\theta_2(x-B_1)}+C_3e^{\theta_3(x-B_1)}+C_4e^{\theta_4(x-B_1)}+F_1, & \textrm{if $x\in[0,B_1)$},\\
x-K, & \textrm{if $x\in[B_1,\infty)$},\\
\end{array} \right.
\end{displaymath}
and
\begin{displaymath}
\nu(x,2) = \left\{ \begin{array}{ll}
D_1e^{\theta_1(x-B_1)}+D_2e^{\theta_2(x-B_1)}+D_3e^{\theta_3(x-B_1)}+D_4e^{\theta_4(x-B_1)}+F_2, & \textrm{if $x\in[0,B_1)$},\\
\widetilde D_1e^{\widetilde\theta_1(x-b_2)}+\widetilde D_2e^{\widetilde\theta_2(x-b_2)}+\bar U(x), & \textrm{if $x\in[B_1,b_2)$},\\
\hat D_1e^{\hat\theta_1(x-B_2)}+\hat D_2e^{\hat\theta_2(x-B_2)}+U(x), & \textrm{if $x\in[b_2,B_2)$},\\
\hat D_1e^{\hat\theta_1(b_2-B_2)}+\hat D_2e^{\hat\theta_2(b_2-B_2)}+U(b_2)+x-b_2-K, & \textrm{if $x\in[B_2,\infty)$}\\
\end{array} \right.
\end{displaymath}
is the value function. And optimal strategy $\hat \pi=\big(\hat u,\hat \Gamma,\hat \xi\big)=\pi^{\nu}=\big(u^\nu,\Gamma^\nu,\xi^\nu\big)$ is the QVI control associated with $\nu$, furthermore, $\hat u$ defined by \\
\begin{displaymath}
\hat u(t) = \left\{ \begin{array}{ll}
L & \textrm{if $\epsilon_t=1$},\\
0 & \textrm{if $\epsilon_t=2$ and $X_t\in[0,b_2)$},\\
L & \textrm{if $\epsilon_t=2$ and $X_t\in[b_2,B_2)$},
\end{array} \right.
\end{displaymath}
for $t\in[0,\hat\Theta)$, and $\hat u(t)=0$ for $t\in[\hat \Theta,\infty)$.
\end{thm}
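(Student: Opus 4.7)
The plan is to verify that the piecewise-defined candidate $\nu$ satisfies the hypotheses of the verification theorem (Theorem 3.3) and that the associated QVI control $\pi^\nu$ is admissible; optimality then follows at once from that theorem. By construction via the smooth-fit equations \eqref{a16} (resp.\ \eqref{a17}), $\nu(\cdot,i)\in C^1([0,\infty))$, is $C^2$ off the finite set of breakpoints, satisfies $\nu(0,i)=0$, and is linear on $[B_i,\infty)$; hence $U_i=B_i$ supplies the constant required in Theorem 3.3, and the remaining task reduces to checking the three QVI inequalities \eqref{4}--\eqref{6} together with admissibility of $\pi^\nu$.

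For inequality \eqref{4}, since $u\mapsto\mathcal{L}_i(u)\nu(x,i)+u$ is affine in $u$ with slope $1-\nu'(x,i)$, it suffices to test the endpoints $u=0$ and $u=L$. On every continuation sub-interval the governing ODE was set up precisely so that $\mathcal{L}_i(u^*)\nu(x,i)+u^*=0$ with $u^*\in\{0,L\}$ matching the sign of $1-\nu'(x,i)$; the other endpoint then satisfies $\mathcal{L}_i(u)\nu(x,i)+u\leq 0$ automatically. On each intervention piece $\nu'\equiv 1$, $\nu''\equiv 0$, so that $\mathcal{L}_i(u)\nu(x,i)+u=\mu(i)-\delta\nu(x,i)+\sum_j\lambda_{ij}\nu(x,j)$, and I would prove this nonpositive by reducing it, via the explicit linear expressions on $[B_i,\infty)$, to an affine inequality in $x$ whose slope and intercept are controlled by the interface values $\nu(b_i,i)$ together with the assumption (H) $L<\mu_*$.

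For inequality \eqref{5}, set $h_i(x,\eta):=\nu(x-\eta,i)+\eta-K-\nu(x,i)$; then $\partial_\eta h_i=1-\nu'(x-\eta,i)$, whose only interior zero is at $x-\eta=b_i$ by Theorem 4.1 together with the postulated convexity of $\nu'(\cdot,i)$. The smooth-fit identity $\nu'(B_i-,i)=1$ and the boundary value $\nu(b_i,i)$ then force $h_i(x,x-b_i)\leq 0$ on the continuation region, while $\nu=\mathcal{M}\nu$ holds by the very definition of $B_i$ in \eqref{a2} on the intervention region; the complementarity \eqref{6} follows immediately from this split. Admissibility of $\pi^\nu$ is routine: $u^\nu$ is $\mathbb{F}$-adapted with values in $\{0,L\}\subset[0,L]$; each $\tau_n^\nu$ is the hitting time of a closed set; $\xi_n^\nu=X_{\tau_n^\nu}-b_{\epsilon(\tau_n^\nu)}$ is $\mathcal{F}_{\tau_n^\nu}$-measurable; and because the post-jump surplus $b_{\epsilon(\tau_n^\nu)}$ lies strictly inside the continuation interval, the inter-jump times are stochastically bounded below by the hitting time of $B_i$ from $b_i$ in either regime, ruling out accumulation.

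The main obstacle I expect is the sign check in \eqref{4} on the mixed intervals where one regime is already in its intervention set while the other is not, in particular $x\in[B_1,b_2)$ in case 1 of this subsection: here $\nu(x,1)=\nu(b_1,1)+x-b_1-K$ is linear but $\nu(x,2)$ is still of the exponential-plus-$\bar U(x)$ form, and the coupling term $\lambda_{12}\nu(x,2)$ in $\mathcal{L}_1(u)\nu(x,1)+u$ may a priori swing in sign. Collapsing this expression to something controllable by (H) requires using the characteristic identity $\phi^5(\hat\alpha_j)=0$ and the smooth-fit match at $b_2$, analogously to the estimate produced in Theorem 4.3 for the other interface. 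A secondary obstacle is to certify a posteriori the convexity of $\nu'(\cdot,i)$ that was conjectured before Theorem 4.1; this reduces to a sign analysis of the coefficients $\widetilde A_j$ (resp.\ $C_j$) obtained from the linear systems \eqref{a16}--\eqref{a17}, using Lemma 4.2 and the ordering $\widetilde\alpha_1<\widetilde\alpha_2<0<\widetilde\alpha_3<\widetilde\alpha_4$ of the characteristic roots.
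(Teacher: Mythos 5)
Your overall strategy is the same as the paper's: reduce the theorem to the hypotheses of the verification theorem (Theorem 3.3), check the QVI \eqref{4}--\eqref{6} interval by interval for each regime, and confirm admissibility of the QVI control. You also correctly locate the crux at the mixed interval $[B_1,b_2)$, where $\nu(\cdot,1)$ is already linear (intervention for regime 1) while $\nu(\cdot,2)$ is still of exponential-plus-$\bar U$ form, so that the coupling term $\lambda_1\nu(x,2)$ in $\mathcal{L}_1(u)\nu(x,1)+u$ is not obviously controlled. The gap is that your proposal stops precisely there: you record this as ``the main obstacle I expect'' and gesture at the identity $\phi^5(\hat\alpha_j)=0$ and the smooth fit at $b_2$, but you do not produce the estimate, so the argument is incomplete at its only genuinely nontrivial step. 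Moreover, your suggested reduction ``to an affine inequality in $x$ \ldots controlled by (H)'' is not how the sign is actually obtained on $[B_1,b_2)$; assumption (H) enters only indirectly, through the auxiliary bound of Theorem 4.4 (which is the result you mean when you cite ``Theorem 4.3'').

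For comparison, the paper closes this step in three moves. First, from the explicit representation of $\nu'(x,2)$ on $[B_1,b_2)$ (two exponentials in $\hat\alpha_1,\hat\alpha_2$ plus the constant slope $\lambda_2/(\lambda_2+\delta)$ of $\bar U$) together with $\nu'(x,2)>\nu'(b_2,2)=1$, it deduces that $\nu'(\cdot,2)$ is strictly decreasing there; hence $\mathcal{A}(x):=\mu_1-(\delta+\lambda_1)\nu(x,1)+\lambda_1\nu(x,2)$, which is what $\mathcal{L}_1(u)\nu(x,1)+u$ collapses to when $\nu'(x,1)\equiv1$ and $\nu''(x,1)\equiv0$, is concave with $\mathcal{A}'(x)=-(\delta+\lambda_1)+\lambda_1\nu'(x,2)$. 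Second, Theorem 4.4 gives $\nu'(B_1-,2)\le(\lambda_1+\delta)/\lambda_1$, so $\mathcal{A}'(B_1)\le0$ and $\mathcal{A}$ is decreasing on the whole interval. Third, $\mathcal{A}(B_1)\le0$ follows from evaluating the ODE on $[b_1,B_1)$ at $B_1-$ and using $\nu''(B_1-,1)\ge0$. You would need to supply this (or an equivalent) chain for your proposal to count as a proof; the analogous argument is then repeated for $\nu(\cdot,2)$ on $[B_2,\infty)$. A smaller inaccuracy: your claimed lower bound on inter-jump times (the passage time from $b_i$ to $B_i$) fails in this parameter regime, since $b_2>B_1$ means a switch from regime 2 to regime 1 while the surplus lies in $[B_1,b_2)$ triggers an immediate intervention; non-accumulation still holds, but only because the jump times of the Markov chain themselves do not accumulate.
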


\begin{proof}
To prove the function $\nu$ defined above is value function, we have only to show it satisfies the conditions of theorem 3.3. Obviously, we have $\nu(x,1)\in C^2([0,\infty)/\{b_1,b_2,B_1\})$ and $\nu(x,2)\in C^2([0,\infty)/\{b_1,b_2,B_1,B_2\})$. From smooth fit condition, we can see $\nu(0,i)=0$, $i\in\{1,2\}$, and $\nu(0,i)$, $i\in\{1,2\}$ is continuously differentiable function. On the other hand, it is manifest that $\nu(x,i)$ is linear on $[B_i,\infty)$, $i=\{1,2\}$. In the following, if we can prove $\nu(x,i)$, $i\in\{1,2\}$ satisfy QVI, then we would complete the proof.

Firstly, we consider the case in which coefficients $\dot A_j$, $\widetilde A_j$, $j=1,2,3,4$, $\breve A_j$, $\hat A_j$, $j=1,2$ are the solution of the system E.q.\eqref{a16} and they satisfy condition  \eqref{42}, that is, the case of $\nu'(0,i)>1$, for $i\in\{1,2\}$. Taking account of $\nu'(x,1)$, on the interval $[0,b_1)$, $\nu'(x,1)>1$, then we have
\[\widetilde L_1(u)\nu(x,1)+u\leq\frac{1}{2}\sigma_1^2\nu''(x,1)+\mu_1\nu'(x,1)-(\lambda_1+\delta)\nu(x,1)+\lambda_1\nu(x,2)=0,\]
where $\widetilde L_i(u)\psi=\frac{1}{2}\sigma_i^2\psi''(\cdot,i)+\big[\mu(i)-u\big]\psi'(\cdot,i)-\delta\psi(\cdot,i)-\lambda_i\psi(\cdot,i)+\lambda_i\psi(\cdot,2-i)$, $i\in \{1,2\}$
and we have
\[\max_{u\in[0,L]}\bigg\{\widetilde L_1(u)\nu(x,1)+u\bigg\}=\widetilde L_1(\hat u)\nu(x,1)+\hat u=0.\]
Then we have equation
\[\big(\nu(x,1)-\mathcal{M}\nu(x,1)\big)\big(\max_{u\in[0,L]}\bigg\{\widetilde L_1(u)\nu(x,1)+u\bigg\}\big)=0.\]
 Further, we can prove, on the interval $[0,b_1)$, inequality $\nu(x,1)>\mathcal{M}\nu(x,1)$ holds. Indeed by differentiating $\nu(x-\eta)+\eta-K$ with respect to $\eta$, we can see function $\nu(x-\eta)+\eta-K$ is decrease with respect to $\eta$. Thus on the interval $[0,b_1)$
\[\mathcal{M}\nu(x,1)=\lim_{\eta\mapsto0^+}{\nu(x-\eta)+\eta-K}=\nu(x,1)-K<\nu(x,1).\]
So QVI is satisfied for $\nu(x,1)$ on interval $[0,b_1)$.

Secondly, on the interval $[b_1,B_1)$, $\nu'(x,1)\leq1$, then we have
\[\widetilde L_1(u)\nu(x,1)+u\leq\frac{1}{2}\sigma_1^2\nu''(x,1)+(\mu_1-L)\nu'(x,1)-(\lambda_1+\delta)\nu(x,1)+\lambda_1\nu(x,2)+L=0,\]
and
\[\max_{u\in[0,L]}\bigg\{\widetilde L_1(u)\nu(x,1)+u\bigg\}=\widetilde L_1(\hat u)\nu(x,1)+\hat u=0.\]
On the other hand, we have
\[\nu(x,1)>\mathcal{M}\nu(x,1),\]
indeed differentiating $\nu(x-\eta)+\eta-K$ with respect to $\eta$, leads to $\frac{\partial[\nu(x-\eta)+\eta-K]}{\partial\eta}\geq0$, which imply when $x-\eta\in[b_1,B_1)$, function $\nu(x-\eta)+\eta-K$ is increase with respect to $\eta$ and at the point $\eta=x-b_1$, the maximum is obtain. Thus when $x\in[b_1,B_1)$,
\begin{eqnarray*}
\mathcal{M}\nu(x,1)&=&\nu(b_1,1)+x-b_1-K\\
&=&\nu(b_1,1)+B_1-b_1-(B_1-x)-K\\
&=&\nu(B_1,1)-(B_1-x)\leq\nu(x,1).
\end{eqnarray*}

Thirdly, we consider the case $x\in[B_1,b_2)$. To begin with, we need to note $\nu'(x,2)$ is decrease on the interval $[B_1,b_2)$. Indeed, for $x\in[B_1,b_2)$
\[\nu'(x,2)=\hat\alpha_1 \hat A_1e^{\hat \alpha_1(x-b_2)}+\hat\alpha_2 \hat A_2e^{\hat\alpha_2(x-b_2)}+\frac{\lambda_2}{\lambda_2+\delta}.\]
From definition of $b_2$ and smooth fit condition, we get $\nu'(x,2)>\nu'(b_2,2)=1$, then we have
\[\hat\alpha_1 \hat A_1e^{\hat \alpha_1(x-b_2)}+\hat\alpha_2 \hat A_2e^{\hat\alpha_2(x-b_2)}+\frac{\lambda_2}{\lambda_2+\delta}>1,\]
thus
\[\hat\alpha_1 \hat A_1e^{\hat \alpha_1(x-b_2)}+\hat\alpha_2 \hat A_2e^{\hat\alpha_2(x-b_2)}>\frac{\delta}{\lambda_2+\delta}>0,\]
hence
\[\nu'''(x,2)=\hat\alpha_1^2\hat\alpha_1\hat A_1e^{\hat \alpha_1(x-b_2)}+\hat\alpha_1^2\hat\alpha_2\hat A_2e^{\hat\alpha_2(x-b_2)}>0,\]
which imply $\nu'(x,2)$ is strictly convex function. Combing with the truth $\nu'(x,2)>\nu'(b_2,2)$. we know $\nu'(x,2)$ strictly decrease on the interval $[B_1,b_2)$.

Next we need to note $\nu''(x,1)$ is strictly increasing in certain left neighborhood of point $B_1$. Indeed, by smooth fit condition and continuity, function $\nu'(B_1-,1)>0$, therefore on this neighborhood we have
\[\nu'(x,1)=\widetilde \alpha_1 \widetilde A_1e^{\widetilde\alpha_1(x-B_1)}+\widetilde \alpha_2\widetilde A_2e^{\widetilde\alpha_2(x-B_1)}+\widetilde \alpha_3\widetilde A_3e^{\widetilde\alpha_3(x-B_1)}+\widetilde \alpha_4\widetilde A_4e^{\widetilde\alpha_4(x-B_1)}>0,\]
then we obtain
\[\nu'''(x,1)=\widetilde \alpha_1^3 \widetilde A_1e^{\widetilde\alpha_1(x-B_1)}+\widetilde \alpha_2^3\widetilde A_2e^{\widetilde\alpha_2(x-B_1)}+\widetilde \alpha_3^3\widetilde A_3e^{\widetilde\alpha_3(x-B_1)}+\widetilde \alpha_4^3\widetilde A_4e^{\widetilde\alpha_4(x-B_1)}>0.\]
As a result we know $\nu''(x,1)$ is strictly increasing on this neighborhood.

Now we begin to prove that, on the interval $[B_1,b_2)$, the function $\nu(x,1)$ satisfies QVI. For $x\in[B_1,b_2)$ and for every $u\in[0,L]$,
\begin{eqnarray*}
\mathcal{A}(x)&:=&\widetilde L_1(u)\nu(x,1)+u=\frac{1}{2}\sigma_1^2\nu''(x,1)+(\mu_1-u)\nu'(x,1)-(\delta+\lambda_1)\nu(x,1)+\lambda_1\nu(x,2)+u\\
&=&(\mu_1-u)-(\delta+\lambda_1)\nu(x,1)+\lambda_1\nu(x,2)+u\\
&=&\mu_1-(\delta+\lambda_1)\nu(x,1)+\lambda_1\nu(x,2),
\end{eqnarray*}
then we have
\[\mathcal{A}'(x)=-(\delta+\lambda_1)+\lambda_1\nu'(x,2),\]
and
\begin{eqnarray}
\label{b1} \mathcal{A}''(x)=\lambda_1\nu''(x,2)<0.
\end{eqnarray}
Here we use the conclusion that $\nu'(x,2)$ is strictly decreased on the interval $[B_1,b_2)$.
\eqref{b1} imply $\mathcal{A}(x)$ is concave function on the interval $[B_1,b_2)$, and lead to $\mathcal{A}'(x)$ is decrease on $[B_1,b_2)$, then combining with theorem 4.4, we have
\begin{eqnarray*}
\mathcal{A}'(x)&\leq&\mathcal{A}'(B_1)=-(\delta+\lambda_1)+\lambda_1\nu'(B_1,2)=-(\delta+\lambda_1)+\lambda_1\nu'(B_1-,2)\\
&\leq&-(\delta+\lambda_1)+\lambda_1\frac{\delta+\lambda_1}{\lambda_1}=0.
\end{eqnarray*}
We know $\mathcal{A}(x)$ is decreased, then we have $\mathcal{A}(x)\leq\mathcal{A}(B_1)\leq0$

On the other hand, we can prove for $x\in[B_1,b_2)$, $\nu(x,1)=\mathcal{M}\nu(x,1)$. Indeed for $x\in[B_1,b_2)$,
\[\nu(x,1)=\nu(b_1,1)+x-b_1-K\]
let $\eta^*=x-b_1$
\[\nu(x-\eta^*,1)+\eta^*-K=\nu(b_1,1)+x-b_1-K.\]
so we get $\nu(x,1)=\mathcal{M}\nu(x,1)$. Combining with the obvious truth $\nu(x,1)\geq\mathcal{M}\nu(x,1)$, we have completed the proof that $\nu(x,1)$ satisfies QVI on the $[B_1,b_2)$.

In the end we verify that on $[b_2,\infty)$, $\nu(x,1)$ satisfies QVI. With the same argument, we know for $x\in[b_2,\infty)$, $\nu(x,1)=\mathcal{M}\nu(x,1)$. Besides, we have
\[\max_{u\in[0,L]}\bigg\{\widetilde L_1(u)\nu(x,1)+u\bigg\}\leq0.\]
Indeed, since $\nu''(b_2,1)=0$ and the result satisfied by $\nu(x,1)$ on the interval $ x\in[B_1,b_2)$. We have
\begin{eqnarray*}
&&(\mu_1-u)\nu'(b_2,1)-(\delta+\lambda_1)\nu(b_2,1)+\lambda_1\nu(b_2,2)+u\\
&&=\frac{1}{2}\sigma_1^2\nu''(b_2-,1)+(\mu_1-u)\nu'(b_2,1)-(\delta+\lambda_1)\nu(b_2,1)+\lambda_1\nu(b_2,2)+u\\
&&\leq0.
\end{eqnarray*}
So for $x>b_2$, note $\nu''(x,1)=0$, then we have
\begin{eqnarray*}
&&\frac{1}{2}\sigma_1^2\nu''(x,1)+(\mu_1-u)\nu'(x,1)-(\delta+\lambda_1)\nu(x,1)+\lambda_1\nu(x,2)+u\\
&&=(\mu_1-u)-(\delta+\lambda_1)\nu(x,1)+\lambda_1\nu(x,2)+u\\
&&=(\mu_1-u)-(\delta+\lambda_1)\nu(b_2,1)+\lambda_1\nu(b_2,2)+u+(\delta+\lambda_1)[\nu(b_2,1)-\nu(x,1)]+\lambda_1[\nu(x,2)-\nu(b_2,2)]\\
&&\leq(\delta+\lambda_1)\big[\nu(b_2,1)-\nu(x,1)\big]+\lambda_1\big[\nu(x,2)-\nu(b_2,2)\big]\\
&&=(\delta+\lambda_1)\big[\nu(b_1,1)+b_2-b_1-K-(\nu(b_1,1)+x-b_1-K)\big]+\lambda_1\big[\nu(x,2)-\nu(b_2,2)\big]\\
&&=(\delta+\lambda_1)(b_2-x)+\lambda_1\nu'(\xi,2)(x-b_2)\\
&&\leq(\delta+\lambda_1)(b_2-x)+\lambda_1(x-b_2)\\
&&\leq-\delta(x-b_2)\leq0,
\end{eqnarray*}
where $\xi\in[b_2,x)$, which lead to $\nu'(\xi,2)\leq1$. Then we have
\[\max_{u\in[0,L]}\bigg\{\widetilde L_1(u)\nu(x,1)+u\bigg\}\leq0.\]

In the following, we need to verify function $\nu(x,2)$ satisfies QVI, which is similar to the way of proving the case of $\nu(x,1)$. We demonstrate briefly:

At first, on the interval $[0,b_2)$, $\nu'(x,2)>1$, we have
\[\widetilde L_2(u)\nu(x,2)+u\leq\frac{1}{2}\sigma_1^2\nu''(x,2)+\mu_1\nu'(x,2)-(\lambda_2+\delta)\nu(x,1)+\lambda_2(x,2)=0,\]
and
\[\max_{u\in[0,L]}\bigg\{\widetilde L_2(u)\nu(x,2)+u\bigg\}=\widetilde L_2(\hat u)\nu(x,2)+\hat u=0.\]
Further, on the interval $[0,b_2)$, inequality $\nu(x,2)>\mathcal{M}\nu(x,2)$ hold. As a result, we know QVI establish on $[0,b_2)$

Secondly, on the interval $[b_2,B_2)$, $\nu'(x,2)\leq1$, we have
\[\widetilde L_2(u)\nu(x,2)+u\leq\frac{1}{2}\sigma_1^2\nu''(x,2)+(\mu_1-L)\nu'(x,2)-(\lambda_2+\delta)\nu(x,1)+\lambda_2\nu(x,2)+L=0,\]
and
\[\max_{u\in[0,L]}\bigg\{\widetilde L_2(u)\nu(x,2)+u\bigg\}=\widetilde L_2(\hat u)\nu(x,2)+\hat u=0.\]
On the other hand, inequality $\nu(x,2)>\mathcal{M}\nu(x,2)$ hold on $[b_2,B_2)$. Then we know QVI established on this interval.

Thirdly, we need to verify QVI hold for $\nu(x,2)$ on $[B_2,\infty)$. Since $\nu''(B_2-,1)\geq0$ and $\nu'(B_1,1)=1 $, then we have
\begin{eqnarray*}
&&(\mu_2-u)\nu'(B_2,1)-(\delta+\lambda_2)\nu(B_2,1)+\lambda_2\nu(B_2,2)+u\\
&&=\frac{1}{2}\sigma_2^2\nu''(B_2-,1)+(\mu_2-u)\nu'(B_2,2)-(\delta+\lambda_2)\nu(B_2,1)+\lambda_2\nu(B_2,2)+u\\
&&\leq0.
\end{eqnarray*}
In addition, for $x>B_2$, $\nu''(x,2)=0$, then we get
\begin{eqnarray*}
&&\frac{1}{2}\sigma_2^2\nu''(x,2)+(\mu_2-u)\nu'(x,2)-(\delta+\lambda_2)\nu(x,1)+\lambda_2\nu(x,2)+u\\
&&=(\mu_2-u)-(\delta+\lambda_2)\nu(x,1)+\lambda_2\nu(x,2)+u\\
&&=(\mu_2-u)-(\delta+\lambda_2)\nu(B_2,1)+\lambda_2\nu(B_2,2)+u+(\delta+\lambda_2)\big[\nu(B_2,1)-\nu(x,1)\big]+\lambda_2\big[\nu(x,2)-\nu(B_2,2)\big]\\
&&\leq(\delta+\lambda_2)\big[\nu(B_2,1)-\nu(x,1)\big]+\lambda_1\big[\nu(x,2)-\nu(B_2,2)\big]\\
&&=(\delta+\lambda_2)\big[\nu(b_1,1)+B_2-b_1-K-(\nu(b_1,1)+x-b_1-K)\big]\\
&&\quad +\lambda_2\big[\nu(b_2,2)+x-b_2-K-(\nu(b_2,2)+B_2-b_2-K)\big]\\
&&=(\delta+\lambda_2)(B_2-x)+\lambda_2(x-B_2)\\
&&=\delta(B_2-x)\leq0.
\end{eqnarray*}
Then we have
\[\max_{u\in[0,L]}\bigg\{\widetilde L_2(u)\nu(x,2)+u\bigg\}\leq0.\]
Combining with $\nu(x,2)=\mathcal{M}\nu(x,2)$, we know QVI is satisfied.

As far as the case in which $\nu'(0,1)=1$ and $\nu'(0,2)>1$, the verfication is similar, we omit it.
\end{proof}

\section{Conclusion}

In this paper, we stand the point of a manager to consider the optimal dividend problem. We assume the surplus of the company is affected by macroeconomic conditions. The manager have to make decision based on economic change which is described by a continuous-time Markov chain. The main contribution of this paper is that we break through the limitation of single dividend strategy to consider the mixed classical-impulse dividend strategy, and the value function as well as the optimal dividend strategy are explicitly derived.


\end{document}